\newtheorem{theorem}{Theorem}[section]
\newtheorem{proposition}[theorem]{Proposition}
\newtheorem{assumption}[theorem]{Assumption}
 \theoremstyle{definition}
\newtheorem{example}[theorem]{Example}
 \newtheorem{definition}[theorem]{Definition}
 \theoremstyle{plain}
\def\set#1{\{ #1 \}}
\def\N{\mathbb{N}}
\def\R{\mathbb{R}}
\def\opt{{\mathit{opt}}}
\def\Post{{\mathit{Post}}}
\begin{document}

\title{A theory of robust software synthesis}
\author{Rupak Majumdar$^1$$^2$, Elaine Render$^1$ and Paulo Tabuada$^1$} 
\address{$^1$ UC Los Angeles, $^2$Max Planck Institute for Software Systems}
\email{rupak@mpi-sws.org, elaine@cs.ucla.edu,tabuada@ee.ucla.edu} 
\thanks{This research was funded in part by the NSF awards 0834771, 0953994, and 1035916.}
\markleft{RUPAK MAJUMDAR ET AL.}

\begin{abstract}
A key property for systems subject to uncertainty in their operating environment is \emph{robustness}, ensuring that unmodelled, but bounded, disturbances have only a proportionally bounded effect upon the behaviours of the system.  Inspired by ideas from robust control and dissipative systems theory,
we present a formal definition of robustness and algorithmic tools for the design of optimally robust controllers
for $\omega$-regular properties on discrete transition systems.
Formally, we define \emph{metric automata} ---automata equipped with a metric on states---
and strategies on metric automata which guarantee robustness for $\omega$-regular properties.
We present fixed point algorithms to construct optimally robust strategies in polynomial time.
In contrast to strategies computed by classical graph theoretic approaches, the strategies
computed by our algorithm ensure that the behaviours of the controlled system gracefully degrade under the action of disturbances; the degree of degradation is parameterized by the magnitude of the disturbance.
We show an application of our theory to the design of controllers that tolerate infinitely many
transient errors provided they occur infrequently enough.
\end{abstract}

\maketitle

\definecolor{yellow}{cmyk}{0,0,0.9,0.0}


\section{Introduction}

Reactive software systems that respond directly or indirectly to information coming from an uncertain environment
are a fundamental component of many mission-critical applications--- in healthcare, energy-distribution,
and industrial automation--- with enormous societal impact.
It is widely recognized that the current design and verification methodologies fall short of what is required
to design these systems in a robust yet cost-effective manner.

Current approaches to system design and verification are only able to differentiate between absolutely correct
behaviour and incorrect behaviour, providing no way of quantifying precisely the effects of errors.
Hence a catastrophic failure is indistinguishable from a small deviation and no guarantees as to the
resulting effects on the nominal system behaviour may be made.
Clearly, this view is overly restrictive.
First, reactive systems need to operate for extended periods of time in environments that are either unknown
or difficult to describe and predict at design time.
For example, sensors and actuators may have noise, there could be mismatches between the dynamics
of the physical world and its model, software scheduling strategies can change dynamically.
Thus, asking for an environment and program model that encompasses all possible
scenarios places an undue burden on the programmer, and the detailed book-keeping
of every deviation from nominal behaviour renders the specifications difficult to understand and maintain.
Second, even when certain assumptions are violated at run-time, we would expect the system to
behave in a \emph{robust} way: either by continuing to guarantee correct behaviour or by ensuring that the
resulting behaviour only deviates modestly from the desired behaviour under the influence of small perturbations.
Unfortunately, current design methodologies fall short in this respect: since the effects of errors cannot be explicitly quantified
no guarantees may be made that small changes in the physical world, in the software world, or in their interaction,
still result in acceptable behaviour.

In this paper, we present a theory of robustness for systems modelled discretely using \emph{automata}.
We are inspired by the well developed notion of robustness in continuous control theory, and the
tools and methodologies which have been successfully applied therein.
In the continuous world, the designer specifies the control system for the nominal case,
ignoring the potential effects of errors on system behaviour and performance.
The design methodology is such that guarantees may then be made as to the degree of degradation
of functionality of the controlled system under disturbances of bounded power.
We aim to provide a similar theory and algorithmic tools in the presence of
discrete changes on the one hand, and in the presence of more complex
temporal specifications ---given, for example, in linear temporal logic (LTL) or as $\omega$-automata---
on the other hand.
We do this in three steps.

First, robustness is a topological concept. In order to define it,
we need to give meaning to the word ``closeness'' or ``distance.''
For this, we define a metric on the system states.
Second, instead of directly modeling the effect of every disturbance, we model a \emph{nominal} system
(the case with no disturbance), together with a set of (unmodeled) disturbances whose effect can be bounded
using the metric.
That is, while making no assumption on the nature or origin of disturbances, we assume that
the disturbances can only push the system to a state within a distance $\gamma$ of the nominal state.
Third, under these three assumptions, we show how we can derive strategies for $\omega$-regular
objectives that are robust in that the deviation from nominal behaviour can be bounded as a function
of the disturbance and the parameters of the system.

To illustrate this last point, consider \emph{reachability properties} $\Diamond F$, where the system tries to reach
a given set of states $F$.
We provide fixed point algorithms which compute strategies that ensure $F$ is reached in the nominal case, and additionally,
when disturbances are present of magnitude $\gamma$, guarantee that the system reaches a set $F'$
which contains states at a distance of $\sigma \gamma$ or less from $F$, where $\sigma \in \mathbb R_0^+$.
Hence we may regard $\sigma$ as a measure of robustness of this strategy.
We also provide guarantees that the resulting inflation in the size of the acceptance set is indeed optimal.
Additionally, we show that an arbitrary strategy obtained through classical automata-theoretic constructions
(e.g., \cite{McNaughton,Zielonka98}) may provide trivial robustness guarantees
(e.g., a bounded disturbance can force the system to reach any arbitrary state).
We show how similar arguments can be made to provide robustness bounds for B\"uchi and parity (and thus, for all LTL)
specifications under the presence of disturbances.

Technically, our constructions lift arguments similar to arguments in robust control
based on control Lyapunov functions to the setting of $\omega$-regular properties.
For reachability, the correspondence is simple: we require that the strategy decrements a ``rank function''
at a rate that depends on the distance to the target.
For parity, the argument is more technical, and uses progress measures for parity games \cite{klarlund90,Namjoshi01}.
Finally, we provide simple fixed point methods to compute optimal robustness bounds and strategies attaining these bounds.

We also consider a simple application of our theory to the synthesis problem in the presence of transient
faults.
We show how, using our methodology, we can algorithmically synthesize controllers for LTL objectives
which provide a time-space tradeoff whereas classical automata-theoretic techniques may not be able to provide
any such property without first explicitly modelling all of the parameters of the fault in detail.

\smallskip
\noindent
\textbf{Related work.}
The work presented here is inspired by the theory of robust continuous control \cite{L2Gain,ROBUSTCONTROL_TEXTBOOK} and the theory of
infinite games with $\omega$-regular objectives on discrete graphs \cite{EmersonJutla91,Thomas95,Zielonka98}.
There has not been much previous work combining robustness with automata theory.
In discrete control systems, tolerance to errors is achieved by explicitly modeling faults and then solving a game assuming that
the adversary determines when faults occur \cite{Girault09}.
As mentioned earlier, the enumeration of possible faults can be tedious, if not impossible, at design time.
Topologies for hybrid systems \cite{HSTopology,HSModels} have been examined before, but the interactions with $\omega$-regular
specifications have not been.

{\em Qualitative} notions of fault tolerance have been studied in distributed systems, for example, by designing algorithms
to be ``self-stabilizing'' on perturbations \cite{Dijkstra74}, or by
requiring that an invariant
is eventually restored after an error (``convergence'') or that the system satisfied a more liberal invariant under an error (``closure'') \cite{Arora93}.
However, quantitative notions, relevant to discrete systems, have not been studied.
Our synthesis procedure produces strategies which satisfy quantitative notions of closure and (under some assumptions on the rate of faults) convergence.

In a series of papers \cite{Bloem091,Bloem09,Bloem10,Cerny10},
robustness measures are developed by comparing the number of environmental errors and the number
of resulting system errors using cost functions.
Bloem et al.\ \cite{Bloem09} define \emph{$k$-robustness}: roughly speaking,
a system is $k$-robust if the ratio of system to environment errors is $k$.
In general terms, the synthesis approach presented here
results in $\sigma$-robust strategies for constant disturbance bounds, where $\sigma$ is a constant associated with the rank
function which serves as a formal characterization of the strategy.
We demonstrate methods to construct strategies with optimal $\sigma$ values.
Moreover, we work in a simpler model, where the only adversarial action is the bounded disturbance,
while the work of Bloem et al.\ considers an explicit adversary.
Our framework has the advantage of leading to simple polynomial-time algorithms for synthesis, but may provide more
conservative results than the game-solving algorithms from \cite{Bloem10} when robustness is sought
in the presence of explicit adversaries.
A more detailed technical comparison with their work is provided in Section~\ref{safetysec}.

Tarraf et al.\  \cite{Tarraf08} develop a framework for quantifying robust stability in
finite Mealy machines by extending classical notions of gain stability from control theory.
The focus is upon input-output stability and, although we adopt a state-space approach,
the results derived in this paper for reachability are similar.
However, we go beyond simple reachability properties and consider also B\"uchi and parity requirements.
A more technical comparison appears in Section~\ref{safetysec}.

Measures of robustness against transient fault models have been studied in
the context of combinational circuits and FPGAs \cite{golshan07:dac,Hu-iccad08,Krishnaswamy-tcad09,Miskov-ZivanovM10}, but
extensions to temporal behaviours have not been considered.

\section{Preliminaries}\label{prelim}

Let $Q$ be a (finite or infinite) set. A function $d: Q \times Q \to \mathbb R_0^+$ is called a \emph{metric} or \emph{distance function} for $Q$ if for all $p, q, r \in Q$,
we have
\begin{itemize}
\item[(i)] $d(p,q) = 0$ if and only if $p = q$ (\emph{identity of indiscernibles});
\item[(ii)] $d(p, q ) = d(q, p)$ (\emph{symmetry});
\item[(iii)] $d(p,r) \leq d(p,q) + d(q,r)$ (\emph{triangle inequality}).
\end{itemize}
The pair $(Q,d)$ where $Q$ is a set and $d$ is a metric for $Q$ is called a \emph{metric space}.
Using a metric $d$ we define the distance from a single state $q\in Q$ to a set of states $Q' \subseteq Q$ as \mbox{$d(q,Q') = \inf_{q' \in Q'} d(q,q')$,} the shortest distance from $q$ to some element of the set $Q'$.

A function $R:Q \to \mathbb R$ is \emph{Lipschitz continuous} if there exists some constant $K > 0$ such that for any two states $q, q' \in Q$:
\[
|R(q) - R(q')| \leq K d(q,q'),
\]
that is, the absolute value of the difference between the images of $q$ and $q'$ is bounded above by a constant multiple of the distance between $q$ and $q'$ for every pair of states in $Q$.
The value $K$ is called the \emph{Lipschitz constant} of the function $R$ with respect to the distance $d$.
Note that if the set $Q$ is finite then every real valued function of $Q$ has this property.

We model discrete control systems using \emph{automata}.
Intuitively, we consider a ``nominal'' automaton modeling the undisturbed
dynamics of the system, and add a set of disturbance actions which can
perturb the nominal behaviour.
We consider a very general model for the disturbances by simply requiring their effects to be bounded but otherwise arbitrary.

For a set $\Sigma$ of symbols we let $\Sigma^*$ represent the set of finite strings of symbols from $\Sigma$, and let $\Sigma^{\omega}$ denote the set of infinite strings over $\Sigma$; we let $\lambda$ denote the empty string. 
The notation $|\Sigma|$ represents the cardinality of the set $\Sigma$ and $\Sigma^+$ is the set of non-empty finite strings over $\Sigma$.
A (metric) \emph{automaton} is a tuple \mbox{$A = ((Q,d), q_0, \Sigma, X, \delta, \gamma)$,}
where
\begin{itemize}
\item $Q$ is a set of \emph{states} and $(Q,d)$ is a metric space;
\item $q_0 \in Q$ is the unique \emph{initial state};
\item $\Sigma$ is a set of (system) \emph{input actions};
\item $X$ is a set of \emph{disturbance indices} including a special symbol $\epsilon$ signifying ``no disturbance'';
\item $\delta: Q \times \Sigma \times X \to Q$ is the \emph{transition function} specifying
the next state given the current
state, the input letter chosen by the system and some member of $X$
chosen by the environment and finally
\item $\gamma:Q \to \mathbb R_0^+$ is a real-valued function such that
for each $p \in Q$ and for every $a \in \Sigma$ such that $\delta(p,a,\epsilon)=q$ for some $q \in Q$
\[
d( q, \delta(p,a,x)) \leq \gamma(q) \ \textrm{for every} \  x \in X.
\]
\end{itemize}

Note that the disturbance bound is defined with respect to the \emph{target} state of a given transition, and not the \emph{source} state (that is, the inequality above is bounded by $\gamma(p)$ and not $\gamma(q)$). It would be a straightforward matter to reformulate the results herein with $\gamma(p)$ replaced by $\gamma(q)$.

An automaton is \emph{finite} if $Q$, $\Sigma$, and $X$ are all finite sets.
For an automaton $A$,
we define the \emph{undisturbed} or \emph{nominal} automaton, written $A_{\epsilon}$, as the automaton resulting from
restricting the set of disturbance indices $X$ to the singleton $\set{\epsilon}$.
For $q \in Q$, $a \in \Sigma$ and $x\in X$ we use the shorthand $q^{ax}$ to denote the state $\delta(q,a,x)$. We let $\overline{\gamma} = \sup_{q \in F} \gamma(q)$. If $\gamma(q) = \gamma(p)$ for all $p,q \in Q$ we say that $A$ has \emph{constant disturbance bound} and hence $\gamma(q) = \overline{\gamma}$ for all $q \in Q$.

Intuitively, the undisturbed automaton models the ``nominal'' behaviour of an automaton,
and the set of disturbance indices $X$ models possible environmental disturbances to the nominal behaviour (the symbol $\epsilon\in X$
thus represents the case where there is no disturbance).
The function $\gamma$ limits the effects of the disturbances with respect to the nominal behaviour at each state $q$:
when an action $a$ is chosen, the disturbances can cause a state
at most distance $\gamma(q^{a\epsilon})$ away from the nominal state to be reached instead.
As a special case, if $\gamma(q^{a\epsilon}) = 0$, then the disturbances have no effect on the nominal behaviour
(i.e., $q^{ax} = q^{a\epsilon}$ for each $a\in \Sigma$, and $x\in X$).

A \emph{trace} $\tau \in Q^* \cup Q^{\omega}$ of the automaton $A$ is a (finite or infinite) sequence of states
$\tau = q_0 q_1 q_2 \ldots$ from $Q$ such that
$q_0$ is the initial state of the automaton, and
there exist inputs $a_0, a_1, a_2, \ldots$ and disturbances $x_0, x_1, x_2, \ldots$ with $\delta(q_i, a_i,x_i) = q_{i+1}$ for $i \geq 0$.
For $q\in Q$ we write $q \in \tau$ and say the state $q$ appears on the trace $\tau$ if $q = q_i$ for some $i\geq 0$.
A \emph{nominal trace} is a trace in the nominal automaton $A_{\epsilon}$, that is, $\tau$ is such that $x_i = \epsilon$ for all $i \geq 0$.
For a finite trace \mbox{$\tau = q_0q_1\ldots q_n \in Q^*$,} we define $|\tau| = n + 1$, the \emph{length} of $\tau$.


The proposed model for the disturbances encompasses a wide range of concrete applications ranging from
the discrete to the continuous world, as illustrated by the next two examples.

\begin{example}{\bf [Digital Design with Single Bit-Flips]}\label{exdisc}
Consider an automaton $A$ modeling a state machine whose states are
encoded using a binary Gray code \cite{DigitalDesignTextbook}.
Each state of $A$ is a sequence of $n$ bits, and neighbouring states differ in only one bit.
Disturbances occur as single-event upsets which can cause a
single bit in the state to flip.
The distance function for the automaton is defined to be the Hamming
distance between $n$-bit strings.
The set of disturbance actions $X \subset \set{0,1}^n$ contains all binary strings of length $n$ with at most one non-zero
digit.
Under this definition, $\epsilon$ is equal to the binary string of length $n$ consisting
entirely of zeros.
The transition function $\delta$ for $A$ is defined from the transition function $\delta_\epsilon$ of $A_\epsilon$ by $\delta(q,a,x) = \delta_\epsilon(q,a) \oplus x$
for any $q \in \set{0,1}^n$ where $\oplus$ is the XOR function.
Hence, the potential effect of the disturbance is bounded by the constant $\overline{\gamma} = 1$.
%
\end{example}

\begin{example}{\bf [Robust Control]}
Consider a continuous control system in discrete time which may be viewed as an infinite-state automaton with transition
function $\delta:\R^n\times \R^m\times\R^p\to \R^n$. The state set is $\R^n$, the input alphabet is $\R^m$ and $\R^p$ is the set of environmental disturbances.
Disturbance signals $x:\N\to \R^p$ are often used as a lumped representation for several sources of uncertainly such as
measurement errors or errors in the model of the transition function.
Hence, the disturbance signals are assumed to be arbitrary but of bounded amplitude, that
is, $\Vert x(k)\Vert\le \gamma'$ for some constant $\gamma'\in \R_0^+$, some
norm $\Vert\cdot\Vert$ on $\R^p$, and every $k\in\N$.
A further typical assumption is Lipschitz continuity of $\delta$.
It then follows from these two assumptions that \[\Vert \delta(q,a,x)-\delta(q,a,0)\Vert\le K'\Vert x-0\Vert\le K'\gamma'\] where $K'$ is the Lipschitz constant.
Therefore by defining the distance function $d$ as \mbox{$d(y,z)=\Vert y-z\Vert$} we conclude that the system in this example has constant disturbance bound $\overline{\gamma} \in \R_0^+$ equal to $K'\gamma'$.
\end{example}

We make certain natural assumptions as to the connectedness of the automata we consider. In order to elucidate these assumptions we define the following notions.
A state $q \in Q$ is \emph{(nominally) reachable} if there exists a finite (nominal) trace connecting $q_0$ to the state $q$,
and \emph{(nominally) coreachable} with respect to some set of states $Q' \subseteq Q$ if there exists a finite (nominal) trace connecting $q$ to some state in $Q'$.
If every state in $Q$ is reachable (resp. coreachable w.r.t.\ $Q'$) we say that $A$ is reachable (resp., coreachable w.r.t.\ $Q'$).
Throughout the following we will assume that every automaton we consider is reachable.

We associate \emph{acceptance conditions} with automata to distinguish between ``good'' and ``bad'' traces.
A \emph{reachability condition} is a set $F \subseteq Q$ of \emph{terminal states}.
A \emph{reachability automaton} $(A,F)$ consists of an automaton $A$ together with
a reachability condition $F$.
A finite trace of the automaton $A$ satisfies the reachability condition $F$ if and only if it
ends at some state in the set $F$. We make the following assumption for all reachability automata in the paper.

\begin{assumption}\label{connectreach}
The automaton $A$ is nominally coreachable with respect to $F$.
\end{assumption}

A \emph{B\"uchi automaton} $(A, F)$ is an automaton $A$ together with a \emph{B\"uchi acceptance condition} $F \subseteq Q$.
For an infinite trace $\tau =q_0q_1\ldots\in Q^{\omega}$ let \[\zeta(\tau) = \set{q\in Q\mid \forall i\geq 0 \ \exists j>i,q_j = q}\]
denote the set of states appearing infinitely often on $\tau$.
A trace $\tau \in Q^{\omega}$ satisfies the B\"uchi acceptance condition $F$ if and only if $\zeta(\tau) \cap F \not = \emptyset$.
In other words, there exists at least one state in the set $F$ which features infinitely often on the trace. We again make Assumption \ref{connectreach}.

A \emph{generalized B\"uchi acceptance condition}
is a set of the form $\mathscr F = \{ F_0, \ldots F_{n-1} \}$ and consists of a finite number of subsets of the state set $Q$.
An automaton $A$ paired with such an acceptance condition
is called a \emph{generalized B\"uchi automaton}.
An infinite trace $\tau \in Q^{\omega}$ of $A$ satisfies the acceptance condition $\mathscr F$ if and only if $\zeta(\tau) \cap F_i \neq \emptyset$
for $i = 0, \ldots , n-1$. We ask that the following assumption is satisfied.
\begin{assumption}\label{connectgenbuchi}
The generalized B\"uchi automaton $A$ is nominally coreachable with respect to $F_i$ for $i=0,\ldots,n-1$.
\end{assumption}
The justifications for this assumption will be discussed further in Section \ref{subgenbu}.

Finally a \emph{parity automaton} $(A, \mathscr F)$ is an automaton $A$ together with a
\emph{parity acceptance condition} consisting of a finite number of pairwise disjoint subsets of the state set $Q$: $\mathscr F = \{F_1, \ldots, F_{2n+1}\}$
with $F_i \cap F_j = \emptyset$ for $i \neq j$.
The \emph{parity} of a state $q \in Q$ is the index $i$ of the unique set
$F_i$ containing $q$, if any, and undefined if there exists no such $F_i$.
A trace $\tau \in Q^{\omega}$ of $A$ satisfies the acceptance condition $\mathscr F$ if and only if the least parity amongst the states in the set $\zeta(\tau)$ is even.
Note that we allow the set of states to be partially colored \cite{Zielonka98}: the set $\cup_{i=1}^{2n+1} F_i$ may not necessarily cover $Q$. The connectedness assumption for parity automata is the following.

\begin{assumption}\label{connectparity}
Each state $q \in Q$ in the parity automaton $A$ is nominally coreachable with respect to some set of even parity $F_{2i}$, and if $q$ has odd parity,
then it is nominally coreachable with respect to some $F_{2i}$ where $2i$ is less than the parity of $q$.
\end{assumption}

The reasoning for this assumption will be given in Section \ref{subparity}.

A \emph{strategy} for an automaton $A$ is a function $S : Q^+ \to 2^{\Sigma}$ specifying input choices for each finite trace.
Given a strategy $S$, the set of \emph{outcomes} is the set of traces $q_0 q_1\ldots$ on which
$q_0$ is the initial state of the automaton, and for each $i\geq 0$ there exists $a_i \in S(q_0\ldots q_i)$ with $q_{i+1} = \delta(q_i, a_i, x)$ for some $x\in X$.
A \emph{nominal outcome} of a strategy is a trace $q_0 q_1 q_2\ldots$ where $q_0$ is the initial state and
for each $i\geq 0$ we have $q_{i+1} = \delta(q_i, a_i, \epsilon)$.

A strategy $S$ is {\em memoryless} if $S(w\cdot q) = S(w'\cdot q)$ for all $w,w'\in Q^*$ and $q\in Q$, that is, if
it depends only on the last state on the trace.
In this case, we omit the (irrelevant) prefix, and consider a strategy to be a function from $Q$ to $2^{\Sigma}$.

A strategy $S$ is called \emph{deterministic} if for all $q \in Q$, $|S(q)| = 1$.
If a strategy does not have this property we say that it is \emph{non-deterministic}.

For a state $q \in Q$ of a reachability or B\"uchi automaton with acceptance set $F$ and strategy $S$, let
$\tau^S(q,Q')$ denote the set of nominal traces
connecting $q$ to an element of the set $Q' \subseteq Q$. Note that if $S$ is deterministic the set $\tau^S(q,Q')$ will contain only one trace
for each $q$; we abuse notation and let $\tau^S(q,Q')$ denote this unique trace directly.
Let $Reach_S(q) \subseteq Q$ denote the set of
states in $A$ reachable from $q$ via a finite trace resulting from the system following strategy $S$ and any environmental action, and let $Reach_{ST}(q)$ denote the set of states in $A$ reachable from $q$ via a finite trace resulting from the system following any strategy $S$ and the environment following strategy $T$.

A {\em disturbance strategy} is a function from $Q^+ \times \Sigma$ to $X$.
Let $S: Q^+\to 2^{\Sigma}$ be a strategy and $T: Q^+\times \Sigma \to X$ a disturbance strategy.
An outcome $q_0q_1\ldots$ of $S$ and $T$ is a trace on which $q_0$ is the initial
state of the automaton and for each $i\geq 0$ we have
$q_{i+1} = \delta(q_i, a, T(q_0\ldots q_i, a))$ for $a \in S(q_0\ldots q_i)$.

Let $(A, F)$ be an automaton together with an acceptance condition.
A strategy is {\em nominally winning} in $A$ if every nominal outcome satisfies $F$.
It is known that reachability, B\"uchi, and parity conditions admit memoryless nominally winning strategies \cite{EmersonJutla91}.
A strategy $S$ is {\em winning} if for every disturbance strategy $T$, each outcome of $S$ and $T$
satisfies $F$.

For a finite automaton $A$ and memoryless strategy $S:Q \to 2^{\Sigma}$ let $A |_{S}$ denote the automaton resulting from restricting the behaviour of $A$ using $S$. That is, the automaton $A |_{S}$ has
\begin{itemize}
\item State set $Q^S = Reach_S(q_0) \subseteq Q$ with distance function $d$;
\item Initial state $q_0$;
\item Input alphabet $\Sigma$;
\item Disturbance alphabet $X$;
\item Partial transition function $\delta_S: Q^S \times \Sigma^S \times X \to Q^S$ with $\delta_S(q,a,x) = \delta(q,a,x)$ if $a \in S(q)$, that is, $\delta_S$ is equal to $\delta$ restricted to $Q^S$ by $S$;
\item Disturbance function $\gamma_S: Q^S \to \mathbb R_0^+$ which is the restriction of $\gamma$ to $Q^S$.
\end{itemize}

We are now able to introduce our definitions of robustness.

\begin{definition}\label{sigmarobconst}
A nominally winning strategy $S$ for a reachability or B\"uchi automaton $(A,F)$ is \emph{$\sigma$-robust} if $S$ is winning for the automaton $(A,F')$ where $F' = \{ q \in Q \mid d(q,F) \leq \sigma \overline{\gamma} \}$.

A nominally winning strategy $S$ for a generalized B\"uchi automaton $(A,\mathscr F)$ is \emph{$\sigma$-robust} if $S$ is winning for the automaton $(A,\mathscr F')$ where $\mathscr F' = \{ F_0',\ldots, F_{n-1}'\}$ with
\[ F_j' = \{ q \in Q \mid d(q,F_j) \leq \sigma \overline{\gamma}\}\]
for $j = 0, \ldots, n-1$.

A nominally winning strategy $S$ for a parity automaton $(A,\mathscr F)$ is \emph{$\sigma$-robust} if $S$ is winning for the automaton $(A,\mathscr F')$ where $\mathscr F' =  \{ F'_0, F_1, F'_2, \ldots, F'_{2n}, F_{2n+1}\}$ where
\[ F_{2i}' = \{ q \in Q \mid d(q,F_{2i}) \leq \sigma\overline{\gamma} \}\]
for $i=0,\ldots, n$.
\end{definition}

We show a simple example illustrating our definitions.

\begin{example}\label{runningex}
Consider the reachability automaton $(A,F)$ with $Q = \{q_0,\ldots, q_6\}$, $\Sigma = \{a,b\}$ and $F = \{q_6\}$. The automaton $A$ is equipped with a distance function $d: Q \times Q \to \mathbb R_0^+$. The relative distances of the states in $Q$ are presented in Table \ref{extable} and are approximated by the relative arrangement of the states in the automaton in Figure \ref{exauto}. The disturbance function is defined as $\gamma(q) = 1$ for all $q \in Q$
and the nominal behaviour is defined as shown in Figure \ref{exauto}. Since the disturbance bound is constant we shall refer to it simply as $\overline{\gamma}=1$.

\begin{figure}
\begin{center}
\begin{tikzpicture}[auto,scale=0.9]
\def\stuckellipse{(-3,-0.5) ellipse (1.75cm and 2cm)}
\def\badstratellipse{(1.7,0) ellipse (4.2cm and 2cm)}
\def\goodstratellipse{(4.6,0) ellipse (1.3cm and 1cm)}
\fill[red!50!white] \stuckellipse;
\fill[yellow!50!white] \badstratellipse;
\fill[green!40!white] \goodstratellipse;
\begin{scope}
      \clip \stuckellipse;
      \fill[orange!60!white] \badstratellipse;
    \end{scope}
   \node (q-) at (-3,0) [] {};
  \node (q0) at (-2,0) [draw,fill=white,circle,thick,inner sep=0.7] {$\mathbf{q_0}$};
   \node (q2) at (-4,0.1) [draw,fill=white,circle,thick,inner sep=0.7] {$\mathbf{q_2}$};
   \node (q1) at (-3,-1.1) [draw,fill=white,circle,thick,inner sep=0.7] {$\mathbf{q_1}$};
   \node (q3) at (1.5,0) [draw,fill=white,circle,thick,inner sep=0.7] {$\mathbf{q_3}$};
   \node (q4) at (1.7,1.5) [draw,fill=white,circle,thick,inner sep=0.7] {$\mathbf{q_4}$};
  \node (q5) at (4,0) [draw,fill=white,circle,thick,inner sep=0.7] {$\mathbf{q_5}$};
   \node (q6) at (5.5,0) [draw,circle,fill=white,double,thick,inner sep=0.7] {$\mathbf{q_6}$};
\draw [->,thick] (q-) to (q0);
\draw [->,thick] (q0) to node [swap] {$b$} (q1);
\draw [->,thick] (q0) to node [swap] {$a$} (q3);
\draw [->,thick,bend right=20] (q1) to node [swap] {$a,b$} (q6);
\draw [->,thick] (q2) to node [swap] {$b$} (q1);
\draw [->,thick,bend left=20] (q2) to node {$a$} (q3);
\draw [->,thick] (q3) to node {$a,b$} (q5);
\draw [->,thick,bend left=20] (q4) to node {$a,b$} (q6);
\draw [->,thick] (q5) to node {$a,b$} (q6);
 \end{tikzpicture}
\caption{The undisturbed automaton $A_{\epsilon}$}
\label{exauto}
\end{center}
\end{figure}
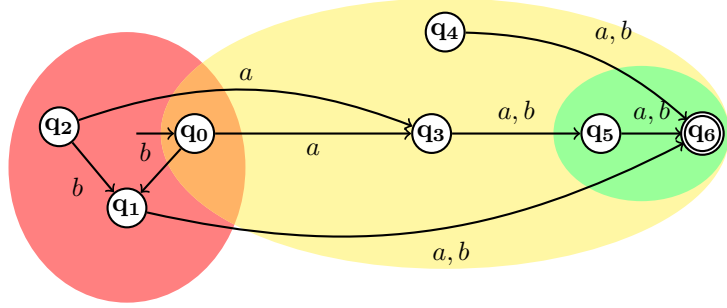

\begin{table}
\begin{center}
\begin{tabular}{c|ccccccc}
\phantom{$q_1$} & $q_0$ & $q_1$ & $q_2$ & $q_3$ & $q_4$ & $q_5$ & $q_6$ \\
\hline
$q_0$ & 0 & 1 & 2 & 4 & 4 & 4 & 5 \\
$q_1$ &   & 0 & 1 & 5 & 5 & 5 & 6 \\
$q_2$ &   &   & 0 & 6 & 6 & 7 & 8 \\
$q_3$ &   &   &   & 0 & 1 & 3 & 3 \\
$q_4$ &   &   &   &   & 0 & 3 & 3 \\
$q_5$ &   &   &   &   &   & 0 & 1 \\
$q_6$ &   &   &   &   &   &   & 0 \\
\end{tabular}
\caption{The distance between states in the automaton $A_{\epsilon}$ of Figure \ref{exauto}.}
\label{extable}
\end{center}
\end{table}

%
Let $S_b:Q \to \Sigma$ be the deterministic memoryless strategy which chooses $b \in \Sigma$ for every $q \in Q$,
and let $S_a:Q \to \Sigma$ be the deterministic memoryless strategy which chooses $a \in \Sigma$ for every $q \in Q$.
Clearly both $S_b$ and $S_a$ are nominally winning for the
reachability condition; they are equally good strategies in classical automata theory.

Consider the result of applying the strategies $S_b$ and $S_a$ in the disturbed automaton $A$.
First note that the unique nominal trace connecting the initial state $q_0$ to the terminal state $q_6$ resulting from applying $S_b$ is $q_0 q_1 q_6$.
Inputting $b$ at state $q_0$ could result in reaching any of the states in the ellipse on the left and hence it is possible that the system may remain in state $q_0$ indefinitely. Then since $q_0$ is at a distance of 5 from the terminal state $q_6$, a trace implementing $S_b$ is only guaranteed to reach a
state at distance $5$ or less
from $F$.
Therefore, in the disturbed automaton, strategy $S_b$ is winning
with respect to the inflated acceptance condition $F_b = \set{q\in Q \mid d(q,F)\leq 5}$ as shown in Figure \ref{exauto} and $S_b$ is 5-robust.

Now consider the strategy $S_a$.
The nominal trace connecting $q_0$ to $q_6$ for this strategy is $q_0 q_3 q_5 q_6$.
Note that $d(q_0,q_3)$ and $d(q_3,q_5)$ are both greater than the power of the disturbance $\overline{\gamma}=1$. Therefore in the disturbed automaton progress is still being made towards $F$ until we reach $q_5$ which is at a distance of 1 from $F$.
Hence the strategy $S_a$ is winning with respect to the inflated reachability condition $F_a = \set{q\in Q \mid
 d(q,F)\leq 1}$ as shown in Figure \ref{exauto} and $S_a$ is 1-robust.

In classical automata and game theory (e.g., \cite{Zielonka98}),
the outcomes of the two strategies are indistinguishable:
both strategies reach the set $F = \{q_6\}$ in the nominal case,
and may result in traces which never reach $F$ when disturbances are present.
However, the metric $d$ provides an extra method of
comparison: the distance from $F$ as a
function of the bound on the disturbance $\overline{\gamma}$.
With this in mind
it is obvious that the strategy $S_a$ is a better choice for the
automaton $A$.

We discuss the construction of the two strategies $S_b$ and $S_a$ in Section \ref{safetysec}.
\end{example}

\section{Reachability}\label{safetysec}

In this section we provide methods to verify the robustness of strategies for finite reachability acceptance conditions, as well as algorithms to synthesize optimally robust strategies. The definitions presented are based upon ideas from continuous control and provide the foundations for dealing with more complex infinite acceptance conditions in the following sections.

Let
$(A,F) = ((Q,d), q_0, \Sigma, \delta, X, \gamma, F)$
be a reachability automaton satisfying Assumption \ref{connectreach}.
A \emph{(reachability) rank function} with respect to $F$ is a function
$R_F:Q \to \mathbb R_0^+$ where $R_F(q) = 0$ if and only if $q \in F$, and
there exists a monotonically increasing function
$\alpha: \mathbb R_0^+ \to \mathbb R_0^+$
satisfying $\alpha(0) = 0$ and
\begin{equation}
\label{LBound}
 \alpha(d(q,F))\leq R_F(q) \quad \mbox{for all }q \in Q.
 \end{equation}
A rank function $R_F$ is said to be a \emph{control Lyapunov function}
if there exists a monotonically increasing function
$f: \mathbb R_0^+ \to \mathbb R_0^+$ satisfying $f(0)=0$ and
such that for each $q \in Q\backslash F$ there exists some $a \in \Sigma$ with
\begin{equation}
\label{Ineq}
R_F(q^{a\epsilon}) - R_F(q) \leq - f(d(q,F)).
\end{equation}
We exclude states in the set $F$ since we exclusively consider finite reachability conditions of the form
$\Diamond F$.
By asking that $R_{F}$ satisfies inequality (\ref{Ineq}) at every state in $Q$ one may also reason about acceptance
conditions of the form $\Diamond \Box F$ (``eventually always $F$'') in the same manner.

%
A control Lyapunov function $R_F$ induces one or more memoryless
strategy functions $S$ defined by mapping a state $q \in Q$ to some subset of the
inputs $a \in \Sigma$ which satisfy inequality~(\ref{Ineq}).

The existence of a control Lyapunov function relies upon the
nominal coreachability assumption with respect to $F$.
This is a natural assumption in certain applications, such
as in the control of physical systems, but it is typically not satisfied in the normal treatment of reachability
of discrete systems. However it is straightforward to restrict the state set of the automaton to exclude states from which the set $F$ cannot be reached via a finite nominal trace.

\begin{theorem}\label{safety}
Let $(A,F)$ be a finite reachability automaton satisfying Assumption \ref{connectreach}.
A memoryless strategy $S$
is nominally winning with respect to $F$
if and only if
there exists a control Lyapunov function $R_F$ such that $S$ can be induced from $R_F$.
\end{theorem}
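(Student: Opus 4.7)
The plan is to prove the two implications separately, with the reverse direction carrying the bulk of the content. For the forward (``if'') direction, fix a control Lyapunov function $R_F$ with associated $\alpha$, $f$ and an induced memoryless strategy $S$. Any nominal outcome $q_0, q_1, q_2, \ldots$ under $S$ satisfies the Lyapunov inequality~(\ref{Ineq}) at every step with $q_i \notin F$. Because $Q$ is finite, the values $\{d(q,F) : q \in Q,\, q \notin F\}$ are bounded below by some $d_{\min} > 0$, and the monotonicity of $f$ yields $R_F(q_{i+1}) \leq R_F(q_i) - f(d_{\min})$ at each such step. Since $R_F \geq 0$, the trace must enter $F$ within $\lceil R_F(q_0)/f(d_{\min}) \rceil$ steps, so $S$ is nominally winning.

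For the reverse (``only if'') direction, suppose $S$ is memoryless and nominally winning. The key observation is that because $Q$ is finite and every nominal outcome from $q_0$ eventually reaches $F$, the subgraph of $A_\epsilon$ induced by $S$ contains no cycle disjoint from $F$ among nominally reachable states; otherwise one could construct a nominal outcome that loops indefinitely outside $F$. Consequently, for every nominally reachable $q$ the set $\tau^S(q,F)$ is finite and of uniformly bounded length. I would define
\[
R_F(q) \;=\; \max\{\,|\tau| - 1 \,:\, \tau \in \tau^S(q,F)\,\},
\]
so that $R_F(q) = 0$ exactly when $q \in F$ (witnessed by the trivial one-state trace). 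For any $q \notin F$ and any $a \in S(q)$, prepending the edge $q \to q^{a\epsilon}$ to a longest trace from $q^{a\epsilon}$ yields a trace from $q$, giving the uniform decrement $R_F(q^{a\epsilon}) \leq R_F(q) - 1$.

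It remains to exhibit the monotonicity witnesses. Since $Q$ is finite, $\{d(q,F) : q \in Q\}$ forms a finite set $0 = d_0 < d_1 < \cdots < d_m$; set $\alpha(d_i) = \min\{R_F(q) : d(q,F) \geq d_i\}$, which is non-decreasing and vanishes at $d_0$, and take $f$ to be any strictly increasing function on $[0,d_m]$ with $f(0) = 0$ and $f(d_m) \leq 1$, for instance $f(r) = r/(d_m+1)$; both are extended piecewise linearly to $\mathbb R_0^+$. By construction $\alpha(d(q,F)) \leq R_F(q)$ and $f(d(q,F)) \leq 1 \leq R_F(q) - R_F(q^{a\epsilon})$ for every $a \in S(q)$, so $R_F$ is a control Lyapunov function and $S$ is induced by it.

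The main technical obstacle I anticipate is the non-determinism of $S$: the rank must be taken as the \emph{longest} compatible trace length so that the decrement inequality holds for \emph{every} allowed input rather than just an optimal one, which in turn requires the acyclicity argument above. A secondary subtlety is that the construction is naturally carried out on the nominally reachable part of $Q$ under $S$; for states never touched by a nominal outcome from $q_0$, one can either restrict attention to $A|_S$ or redefine $S$ along a shortest nominal trace to $F$ (which exists by Assumption~\ref{connectreach}), a modification that preserves nominal winning and makes the construction global.
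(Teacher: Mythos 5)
Your proof is correct, and it follows the same skeleton as the paper's: the forward direction argues that the rank strictly decreases along any nominal outcome by an amount bounded away from zero (the paper is terser but uses the same idea, invoking $\alpha^{-1}$ from (\ref{LBound}) to conclude that $d(q,F)$ vanishes after a finite prefix), and the converse builds a rank function out of the traces $\tau^S(q,F)$ of the given strategy. The constructions differ in a way worth noting. The paper weights each state by $\eta(d(q,F))$ for an arbitrary monotonically increasing $\eta$ and sets $R(q)$ to the \emph{minimum} of the summed weights over $\tau^S(q,F)$, so that the decrement function $f=\eta$ falls out of the construction directly; you take the \emph{unweighted longest} trace length and only afterwards manufacture $\alpha$ and $f$ from the finiteness of $Q$. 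Your choice of the maximum is the more careful one for nondeterministic $S$: with the minimum, the inequality $R(q)\geq R(q^{a\epsilon})+\eta(d(q,F))$ is immediate only for the action attaining the minimum, whereas the maximum yields the decrement for \emph{every} $a\in S(q)$, which is exactly what being induced from $R_F$ requires; you also make explicit the acyclicity of the $S$-induced nominal graph that both constructions silently rely on, and you flag the treatment of states not nominally reachable under $S$ (where $\tau^S(q,F)$ may be empty), which the paper glosses over. Two small points to tighten: your $\alpha(d_i)=\min\{R_F(q): d(q,F)\geq d_i\}$ is only non-decreasing, while the forward direction (and the paper's use of $\alpha^{-1}$) implicitly needs these comparison functions to be strictly increasing --- replace $\alpha$ by any strictly increasing minorant, which exists because $R_F$ is positive off $F$ and $Q$ is finite; and note that ``$S$ can be induced from $R_F$'' only requires $S(q)$ to be a \emph{subset} of the actions satisfying (\ref{Ineq}), so you need not (and do not) establish the reverse inclusion that the paper asserts.
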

\begin{proof}
Let $(A,F)$ be a reachability automaton and let $R_F$ be a Lipschitz continuous
control Lyapunov function with respect to $F$.
Define $S:Q \to 2^{\Sigma}$ by
$S(q) = \set{a\in\Sigma \mid a \mbox{ satisfies inequality~\eqref{Ineq}}}$.
Let $\tau=q_0q_1\hdots$ be a nominal outcome of $S$ in $A$.
Since~(\ref{Ineq}) holds for every $q$ on the trace $\tau$ and the function $R_F$ is non-negative,
$R_F$ decreases along $\tau$ and necessarily reaches zero in finitely many steps.
It then follows from~(\ref{LBound}) that $d(q,F)$ is also zero for some $q \in \tau$ appearing
in $\tau$ after a finite prefix since $d(q,F)\le \alpha^{-1}(R_F(q))$ where the inverse
$\alpha^{-1}$ is also a monotonically increasing function vanishing at zero.

Now let $S:Q \to 2^{\Sigma}$ be a nominally winning strategy for $(A,F)$, and let
$\eta:\mathbb R_0^+ \to \mathbb R_0^+$ be a monotonically increasing function.
We define a weighted digraph $G = (Q,E)$ in which there exists an edge
$(q,x_q,q') \in E$ with $x_q = \eta(d(q,F))$ if and only if
$\delta(q,a,\epsilon) = q'$ for some $a \in S(q)$.
For each $q \in Q$ define
\[
R(q) = \min_{\tau \in \tau^S(q,F)}\sum_{q' \in \tau} x_{q'}.
\]
Note that the definition above is indeed well formed: every trace in the set $\tau^S(q,F)$
is simple (that is, one without loops) by definition and therefore each state in $Q$ may appear at most once on such a trace.

Observe that $\eta(d(q,F)) \leq R(q)$ for all $q \in Q$ and hence that $R$ is a reachability rank function.
Also since $R(q) \geq R(\delta(q,a,\epsilon)) + \eta(d(q,F))$ for every $a \in S(q)$ we may trivially observe that
\[ R(\delta(q,a,\epsilon)) - R(q) \leq - \eta(d(q,F)) \]
for every $q \in Q$ and the function $R$ is indeed a control Lyapunov function.

By definition the function $R$ satisfies inequality (\ref{Ineq}) at a state $q$ for an
input $a\in \Sigma$ if and only if $a \in S(q)$, and therefore the strategy induced
from $R$ will be precisely $S$ as required.
\end{proof}

Control Lyapunov functions provide a method for the verification of robustness of potential strategies.
For a system with a constant disturbance bound, the following theorem describes the ``graceful degradation'' or robustness properties possessed by strategies induced from control Lyapunov functions. When disturbances are present, a nominal outcome is not guaranteed but no catastrophic failure will occur.
Instead, the deviation from a nominal outcome is
linearly bounded by the power of the disturbance, and may be explicitly calculated.

\begin{theorem}\label{safetysigma}
Let $(A,F)$ be a finite reachability automaton satisfying Assumption \ref{connectreach} with disturbance bounded by $\overline{\gamma}$ and let $S$ be a
nominally winning memoryless strategy induced from a control Lyapunov function $R_F$.
Then $S$ is a $f^{-1}(K\overline{\gamma})/\overline{\gamma}$-robust winning strategy where $K$ is the Lipschitz constant of $R_F$.
\end{theorem}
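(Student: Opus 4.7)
The plan is to track how the rank function $R_F$ evolves along a disturbed trace and show that it must strictly decrease until the trace enters the inflated target $F' = \{q \in Q \mid d(q,F) \le f^{-1}(K\overline{\gamma})\}$, so that $S$ is winning for $(A,F')$ and hence $\sigma$-robust for $\sigma = f^{-1}(K\overline{\gamma})/\overline{\gamma}$.

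First I would fix an arbitrary disturbance strategy $T$ and an arbitrary outcome $q_0 q_1 q_2 \ldots$ of $S$ against $T$. For each index $i$, let $a_i \in S(q_i)$ be the input chosen by $S$ and split the transition $q_i \to q_{i+1}$ through the intermediate nominal successor $q_i^{a_i \epsilon}$. The triangle inequality for $R_F$ then gives
\begin{equation*}
R_F(q_{i+1}) - R_F(q_i) \;=\; \bigl(R_F(q_{i+1}) - R_F(q_i^{a_i \epsilon})\bigr) + \bigl(R_F(q_i^{a_i \epsilon}) - R_F(q_i)\bigr).
\end{equation*}
For the first summand I would use Lipschitz continuity of $R_F$ together with the disturbance bound $d(q_{i+1}, q_i^{a_i \epsilon}) \le \overline{\gamma}$ to obtain $R_F(q_{i+1}) - R_F(q_i^{a_i \epsilon}) \le K \overline{\gamma}$. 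For the second summand I would apply the control Lyapunov inequality~(\ref{Ineq}) at $q_i$ with the input $a_i \in S(q_i)$, giving $R_F(q_i^{a_i \epsilon}) - R_F(q_i) \le -f(d(q_i,F))$. Combining yields the key inequality
\begin{equation*}
R_F(q_{i+1}) - R_F(q_i) \;\le\; K\overline{\gamma} - f(d(q_i,F)).
\end{equation*}

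The next step is to exploit monotonicity of $f$. As long as $d(q_i,F) > f^{-1}(K\overline{\gamma})$, we have $f(d(q_i,F)) > K\overline{\gamma}$, so $R_F$ strictly decreases at step $i$. Since $A$ is finite, the set of values $\{f(d(q,F)) - K\overline{\gamma} : q \in Q,\ d(q,F) > f^{-1}(K\overline{\gamma})\}$ is finite, and hence the decrease at such a step is bounded below by a strictly positive constant. Because $R_F \ge 0$ by definition, this forces the trace to reach some index $j$ with $d(q_j,F) \le f^{-1}(K\overline{\gamma})$, i.e.\ $q_j \in F'$. Since $T$ and the outcome were arbitrary, $S$ is winning for $(A,F')$, which is exactly the $\sigma$-robustness claim with $\sigma \overline{\gamma} = f^{-1}(K\overline{\gamma})$.

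The main technical point to be careful about is the handling of $f^{-1}$ when $f$ is only weakly monotone: one should interpret $f^{-1}(K\overline{\gamma})$ as $\sup\{r \in \mathbb{R}_0^+ : f(r) \le K\overline{\gamma}\}$, and verify that states with $d(q,F) > f^{-1}(K\overline{\gamma})$ satisfy a strict inequality $f(d(q,F)) > K\overline{\gamma}$; finiteness of $Q$ then supplies the uniform positive gap needed to conclude termination. Everything else is a routine combination of the Lipschitz bound on $R_F$, the CLF inequality for the nominal step, and non-negativity of $R_F$.
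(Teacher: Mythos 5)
Your proposal is correct and follows essentially the same route as the paper's own proof: the same decomposition of each disturbed step through the nominal successor, the same combination of the Lipschitz bound with the control Lyapunov inequality to get $R_F(q_{i+1}) - R_F(q_i) \le K\overline{\gamma} - f(d(q_i,F))$, and the same conclusion that the trace must enter $F'$. Your explicit use of finiteness of $Q$ to extract a uniform positive decrease, and your careful reading of $f^{-1}$ for weakly monotone $f$, tighten a termination step that the paper leaves informal, but they do not change the argument.
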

\begin{proof}
Assume that $R_F: Q \to \mathbb R_0^+$ is a control Lyapunov function for the reachability automaton $(A,F)$ and let $S:Q \to 2^{\Sigma}$ be a strategy induced from $R_F$.
Let $T$ be a disturbance strategy and consider an outcome $\tau=q_0q_1\hdots$ of $S$ and $T$. We first establish the inequality:
$$R_F(q^{ax}) - R_F(q) \leq K\overline{\gamma} - f(d(q,F))$$
for any $q$ appearing in $\tau$:
\begin{eqnarray}
R_F(q^{ax}) - R_F(q) & = & R_F(q^{ax}) -R_F(q^{a\epsilon}) +R_F(q^{a\epsilon})-R_F(q)\notag\\
& \le& \vert R_F(q^{ax}) -R_F(q^{a\epsilon})\vert -f(d(q,F))\notag\\
& \leq& K d(q^{ax},q^{a\epsilon}) 	-f(d(q,F))\notag\\	
&= & K\overline{\gamma}-f(d(q,F)).\notag
\end{eqnarray}
Note that as long as $q$ is sufficiently far from $F$, the value $-f(d(q,F))$ is sufficiently negative, and the sum $K\overline{\gamma}-f(d(q,F))$ remains negative.
Hence, $R_F$ continues to decrease along $\tau$.
The situation changes when we reach a state $q$ satisfying $K\overline{\gamma}>f(d(q,F))$.
Hence, an outcome of $S$ and $T$ is guaranteed to reach the set
$F' = \{q \in Q \mid f(d(q,F)) \leq K\overline{\gamma}\}$
(or equivalently, $F' = \{q \in Q \mid d(q,F) \leq f^{-1}(K\overline{\gamma})\}$)
in finitely many steps and therefore $S$ is $\sigma$-robust where $\sigma = f^{-1}(K\overline{\gamma})/\overline{\gamma}$.
\end{proof}

The case in which the function $f$ is linear, that is, for every $x \in \mathbb R_0^+$,
$f(x) = cx$ for a fixed constant $c \in \mathbb R_0^+$, is worth noting. In this case the expression $\sigma = f^{-1}(K\overline{\gamma})/\overline{\gamma}$ in the above theorem simplifies to
$\sigma = K/c$.

Although a similar approach to the one provided in the proof of Theorem \ref{safetysigma} for calculating robustness bounds may be used for automata
having state dependent disturbance bounds, the resulting value is likely to be conservative.
Indeed, let $(A,F)$ be a finite reachability automaton and let $S$ be
a memoryless strategy with associated control Lyapunov function $R$.
Let
\[
Q' = \{ q \in Reach_S(q_0) \mid \exists p \in Reach_S(q_0),\exists a \in S(p) : \delta(p,a,\epsilon)=q \wedge K\gamma(q) > f(d(p,F)), \}
\]
 the set of states where the control Lyapunov inequality (\ref{Ineq}) may be violated under the effects of a disturbance, that is, states $q$ from which the disturbance action can force the system to reach a state which is further away from the target set than $q$.
The value of $\sigma$ calculated via the method presented in Theorem \ref{safetysigma} would be
\[ \sigma = \frac{\max \{ d(q,F) \mid q \in Q' \}}{\overline{\gamma}}.\]
Let $q \in Q$ be the state achieving this value, that is, $d(q,F) = \sigma \overline{\gamma}$ and let $p \in Q$ be the state reached by following $S$ at $q$. If
$\delta(p,a,x) \not\in Q'$
for all $x \in X \setminus \{\epsilon\}$,
a smaller value of the bound could be achieved.

Instead, for systems with state dependent disturbance bounds, we give
a dynamic programming algorithm. 
The operators presented below will form the basis for optimal synthesis and robustness verification not
just for reachability automata, but for the $\omega$-regular automata which follow in later sections.

Fix a reachability automaton $(A,F)$, and let $Q = \set{q_0,\ldots, q_{m-1}}$.
We characterize the optimal robustness bound achievable by a memoryless strategy 
as the fixed point of a certain operator.
The operator acts upon a vector of size $|Q|=m$ consisting of positive real numbers.

Consider a state $q\in Q$ and the objective to reach the set $F$ via a finite trace beginning at $q$.
We argue that for any nominally winning strategy $S$ beginning at $q$, the robustness bound
$\sigma$ cannot be more than $d(q,F)/\overline{\gamma}$, since just by staying at $q$, the strategy
ensures that the system is within distance $d(q,F)$ of the final states
(c.f. strategy $S_b$ in Example \ref{runningex}).
Hence the maximal value of $\sigma$ is equal to $d(q,F)/\overline{\gamma}$.

We define a sequence of vectors $\opt^i$ for $i\geq 0$.
With the above intuition, we define
$\opt^0(j) = d(q_j,F)$ for $j=0,\ldots,m-1$.

For $q \in Q$ let
$\Post_a(q) = \{ q' \mid \exists x\in X \ \delta(q,a,x) = q' \} \subseteq Q$,
the set of states reachable from $q$ via the input action $a$.
The definition is extended to sets of states in the natural way.
Further, for words $w = w_1 \ldots w_n \in \Sigma^*$, we write
\[ \Post_{w}(q) = \Post_{w_n}(\Post_{w_{n-1}}( \ldots \Post_{w_1}(q))), \]
with the assumption that $\Post_\lambda(q) = q$ for the empty word $\lambda$.

\begin{definition}\label{itdef}
Define the monotonic operator $\underline{g}: (\mathbb R_0^+)^{m} \to (\mathbb R_0^+)^{m}$ by
\[
\underline{g}(\opt)(j) = \min \left( opt(j),\min_{a\in \Sigma} \left(\max_{q_i \in Post_a(q_j)} opt(i) \right)\right).
\]
Let $\opt^{i+1} = \underline{g}(\opt^i)$.
\end{definition}

Consider the result of applying $\underline{g}$ once to the vector $opt^0$.
As previously stated, $opt^0(j) = d(q_j,F)$ for each $1 \leq j \leq m$. Applying $\underline{g}$ gives the result
\begin{align*}
opt^1(j) =& \min \left( opt^0(j), \min_{a\in \Sigma} \left( \max_{q_i \in Post_a(q_j)} opt^0(i) \right)\right) \\
         =& \min \left( d(q_j,F), \min_{a\in \Sigma} \left( \max_{q_i \in Post_a(q_j)} d(q_i,F) \right)\right) \\
         =& \min_{a\in \Sigma \cup \{\lambda\}} \max_{q_i \in Post_a(q_j)} d(q_i,F),
\end{align*}
where $\lambda$ is the empty word.
So $opt^1(j)$ encodes the closest the system is able to get to $F$ via a trace of length at most one beginning
at $q_j$ when the environment chooses disturbance inputs which are worst-case,
that is, the environment's objective is to force the system to move as far away as possible from $F$.
Iterating this reasoning leads to the fixed point $\underline{\opt}^*$ defined by
\[
\underline{\opt}^*(j) = \min_{w \in \Sigma^*} \max_{q_i \in Post_w(q_j)} d(q_i,F).
\]
Since the automaton $A$ is finite 
the fixed point $\underline{\opt}^*$ will be reached in a finite number of iterations.
There can be at most $|Q|-1$ iterations since this is the longest input word labeling a simple path between two states in $A$, and
each iteration can be performed in time polynomial in the size of $Q$.
Hence the overall worst case complexity for the algorithm is polynomial in the size of $Q$.

This algorithm is easily seen to be a simple generalization of the Bellman-Ford shortest path algorithm \cite{Bellman54},
modified to take into account the non-determinism resulting from disturbances.

We first use Definition \ref{itdef} 
to verify robustness for a given strategy.
Given a nominally winning memoryless strategy $S$ for a finite
reachability automaton $(A,F)$ the robustness bound $\sigma$ for $S$ is precisely
\[ \sigma = \frac{\underline{opt}^*(0)}{\overline{\gamma}}
\]
for the automaton $A|_{S}$ where $q_0$ is the initial state.

Finally we approach the issue of the synthesis of optimally robust winning strategies.
Given a finite reachability automaton $(A,F)$ the optimal achievable robustness bound for $A$ is
\[
\sigma_{min}= \frac{\underline{\opt}^*(0)}{\overline{\gamma}}
\]
A memoryless strategy achieving the optimal robustness for $(A,F)$ may be recovered in the following way.
We define
$S(q) = \set{a\in\Sigma\mid q_j = \delta(q,a,\epsilon) \mbox{ and } \underline{\opt}^*(j) = \underline{\opt}^*(0)}
        \setminus
        \set{a\in \Sigma \mid q \in \Post_a(q)}$ if the right-hand side is non-empty,
and $S(q) = \Sigma$ otherwise.

\begin{example}
Returning to Example \ref{runningex}, we discuss the two rank functions $R_b: Q \to \mathbb R_0^+$ and $R_a: Q \to \mathbb R_0^+$ from which the strategies
 $S_b$
and $S_a$ are induced. Table \ref{exvaluetable} lists the distance from each state to the terminal state $q_6$ and the value of the two rank functions $R_b$
and $R_a$.

The function $R_b:Q \to \mathbb R_0^+$ is the result of a classical graph theoretic shortest path approach - each
state $q \in Q$ is mapped to the length of the shortest path connecting $q$ to some state in $F$.

Let $\eta: \mathbb R_0^+ \to \mathbb R_0^+$ be the monotonically increasing function defined by $x \mapsto 2x$ for all $x \in \mathbb  R_0^+$.
Then $R_a$ is a control Lyapunov function since for all $q \in Q \setminus \{q_6\}$
\[
R_a(q^{a\epsilon}) - R_a(q) \leq -\eta(d(q,q_6)).
\]
For optimal robustness, the vectors $\opt^0$ and $\underline{\opt}^*$
are as follows for this example.
\[
\opt^0 = \left[ \begin{array}{c}5 \\ 6 \\ 8 \\ 3 \\ 3 \\ 1 \\ 0 \end{array} \right], \quad
\underline{\opt}^* = \left[ \begin{array}{c} 1 \\ 1 \\ 1 \\ 1 \\ 1 \\ 1 \\ 1 \end{array} \right],
\]
Therefore the strategy $S_a$ is optimal with respect to a disturbance of size $\gamma=1$.

\begin{table}[t]
\begin{center}
\begin{tabular}{c|ccc}
$q$ & $d(q,q_6)$ & $R_b(q)$ & $R_a(q)$ \\
\hline
$q_0$ & 5 & 2 & 18 \\
$q_1$ & 6 & 1 & 12 \\
$q_2$ & 8 & 2 & 24 \\
$q_3$ & 3 & 2 & 8  \\
$q_4$ & 3 & 1 & 6  \\
$q_5$ & 1 & 1 & 1  \\
$q_6$ & 0 & 0 & 0  \\
\end{tabular}
\caption{The rank functions $R_b$ and $R_a$ for the automaton $A$.}
\label{exvaluetable}
\end{center}
\end{table}
\end{example}

\smallskip
\noindent{\bf Comparison with existing work.}
At this point it is convenient to compare our framework with the frameworks of Bloem et. al.~\cite{Bloem09} and of Tarraf et. al.~\cite{Tarraf08}.
Both of these references adopt an input-output perspective by relating environment errors (inputs) to system errors (outputs).
In contrast, we adopt a state-space approach by endowing the set of states with a metric and placing no assumptions on the environment other than having bounded power.

In \cite{Bloem09} the authors define the notion of \emph{$k$-robustness} for automata. For a reachability automaton $(A,F)$ two monotonically increasing functions which map zero to zero are defined: an \emph{environmental error function} $e:\Sigma^* \to \mathbb N \cup \{\infty\}$ and a \emph{system error function} $s: \Sigma^* \to  \mathbb N \cup \{\infty\}$. A pair $(e,s)$ of error functions for a given automaton is called an \emph{error specification} for $A$. Then a strategy $S:Q \to \Sigma$ for $A$ is $\kappa$-robust with
respect to the error specification $(e, s)$ if there exists $\beta \in \mathbb N$ such that for all $w \in \Sigma^*$ which label outcomes of $S$,
\[ s(w) \leq \kappa e(w) + \beta.\]

In order to compare Bloem's and Tarraf's results with ours, we resort to some key ideas from robust control~\cite{L2Gain,ROBUSTCONTROL_TEXTBOOK}. First, we define an environment error signal $\mathsf{e}=\mathsf{e}_1\mathsf{e}_2\hdots \mathsf{e}_n\in\R^*$ and a system error signal $\mathsf{s}=\mathsf{s}_1\mathsf{s}_2\hdots \mathsf{s}_n\in \R^*$. The only assumption we place on $\mathsf{e}$ and $\mathsf{s}$ is that an absence of environment errors at time $k\in \N$ corresponds to $\mathsf{e}_k=0$ and the absence of system errors at time $k\in \N$ corresponds to $\mathsf{s}_k=0$. The error functions $e$ and $s$ in Bloem's framework can be seen as the cumulative versions of $\mathsf{e}$ and $\mathsf{s}$, for example:
$$e(k)=\sum_{i=0}^{k}\mathsf{e}_i,\qquad s(k)=\sum_{i=0}^{k}\mathsf{s}_i.$$
In Tarraf's framework and notation the role of $\mathsf{e}_i$ is played by $\rho(u(i))$ and the role of $\mathsf{s}_i$ is played by $\mu(y(i))$.
We now regard an automaton as defining a transformation $F:\R^*\to \R^*$ from environment error signals to system error signals $F(\mathsf{e})=\mathsf{s}$. In general
$F$ will be a set valued function, but we assume it to be single valued to simplify the discussion.


The notion of finite-gain stability from robust control can now be introduced as follows:

A map $F:\R^*\to\R^*$ is said to be \emph{finite-gain stable} with \emph{gain} $\kappa$ and \emph{bias} $\beta$ if the following inequality holds:
\begin{equation} \label{FiniteGain}
\sum_{i=0}^{n}F(\mathsf{e})\le \kappa\sum_{i=0}^{n}\mathsf{e}+\beta
\end{equation}
for every $\mathsf{e}\in \R^*$. A more condensed version of~(\ref{FiniteGain}) is:
$$s\le \kappa e+\beta$$
which is Bloem's notion of $\kappa$-robustness and Tarraf's notion of $\rho/\mu$ gain stability.
It is well known in robust control and dissipative systems theory that the existence of
a certain type of Lyapunov function (a storage function) implies finite-gain stability.
In the context of reachability automata, we define $\mathsf{e}$ to be the effect of
the environment actions on the state: \[\mathsf{e}_k=d(q_k^{ax},q_k^{a\epsilon}).\]

If $x=\epsilon$ then $q_k^{ax} = q_k^{\epsilon}$ and $\mathsf{e}_k=0$, since the behaviour coincides with the nominal behaviour under no environment disturbances. For problems of the form $\Diamond \Box F$ we regard $F$ as the set of states describing the desired operation for the system. Hence, any deviation from $F$ is regarded as a system error.
The system error signal is defined as:

$$\mathsf{s}_k=d(q_k,F).$$

Standard arguments in dissipative systems theory~\cite{L2Gain} would then show that:
$$s\le f^{-1}(K e)+R_F(q_0)$$
where  $f:\mathbb R_0^+ \to \mathbb R_0^+$ is some monotonically increasing function satisfying $f(0)=0$ and $K$ is the Lipschitz constant of $R_F$.
It is also known that finite-gain stability does not imply the notion of stability
considered in this paper unless certain controllability/observability assumptions hold.
This follows from the fact that it may not be possible to infer the decrease of $R_F$
at every state only from the knowledge of $\mathsf{e}$ and $\mathsf{s}$
when not every state can be reached from $q_0$ or when $\mathsf{s}$ does not provide enough information about the state.

\section{Example}

\begin{figure}
\begin{center}
\begin{tikzpicture}[scale=0.6]
\node (1) at (0,2) [draw,circle] {1};
\node (2) at (-2,0) [draw,circle] {2};
\node (3) at (2,0) [draw,circle] {3};
\node (4) at (0,-2) [draw,circle] {4};

\draw (1) to (2);
\draw (2) to (4);
\draw (1) to (3);
\draw (3) to (4);
\end{tikzpicture}
\caption{The communication network for leadership election.}
\label{netex}
\end{center}
\end{figure}
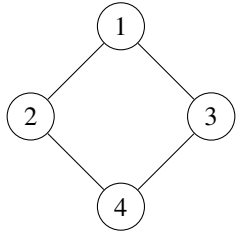

In this section we recast a classic problem from distributed computing in our framework to allow
the explicit quantification of the robustness of possible solution strategies.
Figure \ref{netex} shows a network of four computer nodes, each having a two way
communication channel (represented by an undirected edge in the graph)
connected to each of its two neighbouring nodes.
Each computer in the network has a unique identifier which is presented in the figure.
The four nodes are required to elect a leader, and may make use of the communication channels to exchange information.
In order for a leader to be elected, the nodes must come to a unanimous consensus on which of the four nodes is the leader.
However, the communication channels between the nodes are known to be subject to noise,
and so messages may be corrupted between transmission and receipt, as described below.

We model the system as an automaton with state set defined by the global state of the network.
That is, each state in the automaton represents the current belief of the four nodes as to who is the leader.
Hence $Q \subseteq \{1,2,3,4\}^4$.
The initial state is $(1,2,3,4)$.
At each state, each of the four nodes communicates its current belief to its neighbouring nodes,
and each node uses this information combined with its own belief about the current leader
to update its belief.
The acceptance condition is a reachability condition with terminal set $\set{(i,i,i,i)\mid i\in\set{1,2,3,4}}$.
There are a number of different strategies which the nodes could apply to decide upon a new belief
using the information available to them.
We consider the following three possibilities.
\begin{description}
\item[B] Each node chooses the least of the three values;
\item[T] Each node chooses the largest of the three values;
\item[F] Each node chooses the integer part of the average of the three values.
\end{description}
It is a well known result in distributed computing that choosing either of the first two strategies is computationally optimal
\cite{Lynch}.

The disturbances in our system are characterized in the following way: beliefs are assumed to be sent as decimal numbers,
and the noise in the channel may cause the value of the sent belief to change by $\pm 1$.
However, we do not allow messages outside of the set $\{1,2,3,4\}$: for
example if a disturbance occurs on the message `1', the recipient will receive either `1' or `2'.
A distance function on the state set $Q$ is defined by
\[ d((x_1x_2,x_3,x_4),(y_1,y_2,y_3,y_4)) = |x_1-y_1| + |x_2-y_2| + |x_3-y_3| + |x_4-y_4|;\]
this is precisely the \emph{Manhattan} or $L_1$ norm.
For each node $i \in \{1,2,3,4\}$ we assume that only one of the two incoming messages may be affected
by the disturbance at any given time in order to simplify the presentation, though the methodology applies in the same way without
this assumption.
This combined with our assumption about the power of the disturbance on the messages themselves translates into a
constant disturbance in our automaton model of size $\overline{\gamma} =1$.


\begin{figure}
\centering
\subfloat{Strategy $B$}{\label{fourbottom}\includegraphics[width=5cm]{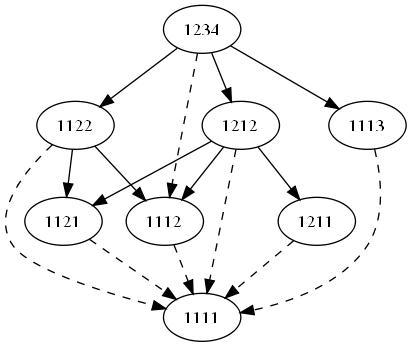}}
\subfloat{Strategy $T$}{\label{fourtop}\includegraphics[width=5cm]{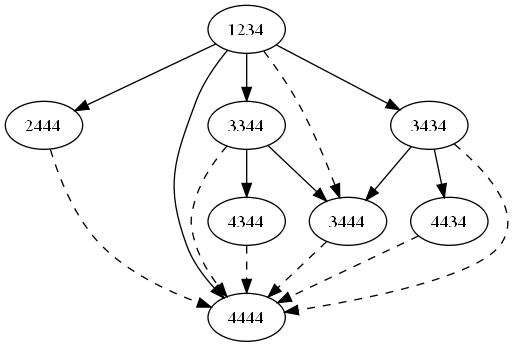}}
\caption{Classical optimal strategies for the leader election problem.}
\label{fouropt}
\end{figure}

%


  \begin{figure*}
  \begin{center}
   {\includegraphics[width=8cm]{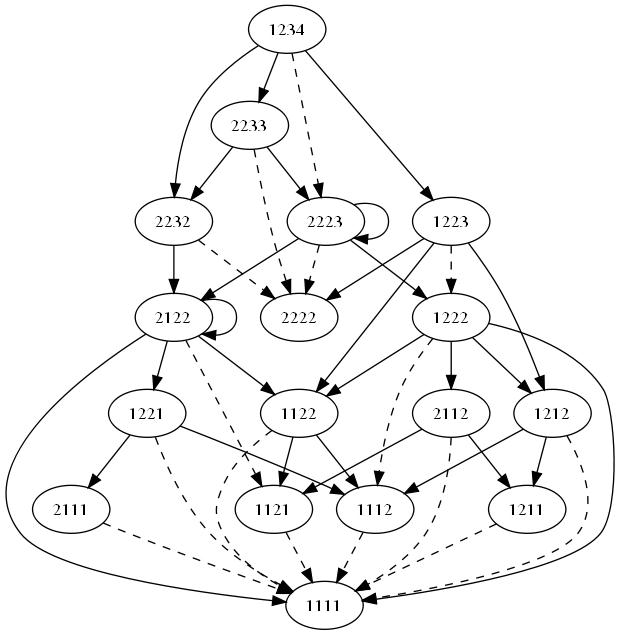}}
   \caption{Leader election using strategy $F$.}
   \label{fourave}
   \end{center}
   \end{figure*}

Figures \ref{fouropt} and \ref{fourave} show the metric automata for the three strategies described above.
We restrict to the reachable part of the automaton.
Nominal transitions are represented by dashed lines, disturbed transitions by solid.

We use the function $\underline{opt}$ described in Section \ref{safetysec} to analyze the robustness of the three strategies.
First note that strategies $B$ and $T$, the classical optimal strategies, are 0-robust.
Indeed, the fixed point iteration gives $\sigma_{\min}= 0$.
More interesting is the conclusions we may draw for the floor strategy $F$.
Here $\sigma_{\min} = 1$, due to the self loops at states
$2223$ and $2232$.
Hence a disturbance bounded by $\gamma=1$ results in only one node having the wrong belief and hence
though the nodes do not reach a unanimous decision, they at least are able to come to a majority decision.
This is obviously a ``better'' outcome than that resulting from only two nodes agreeing on their belief.

\section{Omega-regular objectives}

We now extend the results to more general $\omega$-regular acceptance conditions.
We do this in two steps.
First, we provide a simple generalization to B\"uchi acceptance
conditions.
Then, we show how ideas based on \emph{progress measures}
\cite{klarlund90,Namjoshi01}
can be used to provide robustness results for parity acceptance conditions.
In every case we make an appropriate connectedness assumption.

\subsection{B\"uchi acceptance conditions}\label{buchisec}

Let $(A,F) = ((Q, d), q_0, \Sigma, \delta, X, \gamma, F)$ be a B\"uchi automaton
with acceptance condition $F \subseteq Q$ such that
$A$ is nominally coreachable with respect to $F$.
First note that B\"uchi acceptance asks that for a trace
$\tau \in Q^{\omega}$, the intersection
of the set $\zeta(\tau)$ with the set of terminal states is non-empty.
So by viewing the B\"uchi condition as an infinite series of
reachability conditions for the set $F$, and under
Assumption \ref{connectreach}, the definitions and results
for reachability also apply in the case of B\"uchi automata.

In particular, note that the definition of a control Lyapunov function given
in the previous section only requires that inequality (\ref{Ineq}) holds
for states outside of the set $F$.
A control Lyapunov function $R_F$ for a B\"uchi automaton induces a memoryless
strategy $S: Q\to 2^{\Sigma}$ which specifies
actions satisfying \eqref{Ineq} for any state in $Q\setminus F$ and any
arbitrary action for states in $F$.
The strategy $S$ is nominally winning: the argument that $F$ is reached is identical
to the reachability case, and the coreachability assumption
ensures that an arbitrary action from $F$ will not prevent $F$ from being visited again.


\begin{proposition}\label{buchi}
Let $(A,F)$ be a finite B\"uchi automaton satisfying Assumption \ref{connectreach} and let $S$ be a memoryless strategy. Then $S$ is nominally winning if and only if there exists a Lipschitz continuous control Lyapunov function $R_{F}$ such that $S$ can be induced from $R_{F}$.
\end{proposition}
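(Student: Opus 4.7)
\textbf{Proof plan for Proposition \ref{buchi}.}

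The plan is to reduce both directions to the reachability argument of Theorem \ref{safety} by viewing the B\"uchi condition $F$ as demanding infinitely many reachability objectives for $F$, and by exploiting the freedom that the CLF inequality \eqref{Ineq} is only required on $Q\setminus F$.

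For the sufficiency direction, suppose $R_F$ is a Lipschitz continuous CLF for $F$ and let $S$ be the strategy induced from $R_F$, which prescribes at each $q\in Q\setminus F$ an action satisfying \eqref{Ineq} and at each $q\in F$ any action in $\Sigma$. Fix a nominal outcome $\tau=q_0q_1\ldots$ of $S$. I would argue that $F$ is hit infinitely often as follows: pick any index $i$; if $q_i\in F$ we are done for that round, otherwise the reachability argument from the proof of Theorem \ref{safety} applies verbatim to the suffix $q_iq_{i+1}\ldots$, since $R_F$ strictly decreases by at least $f(d(q_j,F))>0$ at every state $q_j\notin F$ along the suffix and is lower bounded by $\alpha(d(q_j,F))\ge 0$, forcing some $q_{j}\in F$ after a finite number of steps. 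Iterating the argument starting from the step right after each visit of $F$ yields infinitely many visits to $F$, so $\zeta(\tau)\cap F\neq\emptyset$ and $S$ is nominally winning.

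For the necessity direction, suppose $S$ is a nominally winning memoryless strategy for the B\"uchi condition $F$. Because every nominal outcome of $S$ visits $F$ infinitely often, every state in $Reach_S(q_0)$ is nominally coreachable to $F$ under $S$, so the set $\tau^S(q,F)$ is non-empty for each such $q$. I would then mimic the construction used in the second half of Theorem \ref{safety}: pick a monotonically increasing $\eta\colon\mathbb R_0^+\to\mathbb R_0^+$ with $\eta(0)=0$, attach to every $S$-edge $(q,q')$ outgoing from a state $q\notin F$ a weight $\eta(d(q,F))$, and define
\[
R_F(q)=\begin{cases}0,&q\in F,\\ \min_{\tau\in\tau^S(q,F)}\sum_{q'\in\tau,\ q'\notin F}\eta(d(q',F)),&q\in Q\setminus F.\end{cases}
\]
Simple paths have length at most $|Q|$ so the minimum is attained and finite. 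The lower bound $\alpha(d(q,F))\le R_F(q)$ with $\alpha=\eta$ is immediate, and Bellman's principle gives $R_F(q^{a\epsilon})\le R_F(q)-\eta(d(q,F))$ for any $a\in S(q)$ with $q\notin F$, which is exactly the CLF inequality \eqref{Ineq} with $f=\eta$. Conversely, any $a\in\Sigma$ satisfying \eqref{Ineq} at a state $q\notin F$ must lie in $S(q)$ (otherwise it would yield a strictly shorter-weight path than the minimum), so $S$ is induced from $R_F$. Lipschitz continuity of $R_F$ is obtained either by choosing $\eta$ Lipschitz and observing that on a finite state set any real-valued function is automatically Lipschitz, as noted in the preliminaries.

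The main obstacle I anticipate is the treatment of actions at states of $F$, where the induced strategy is unconstrained: an arbitrary move from $F$ could send the trace to a state far from $F$, and the proof must show that $R_F$-descent then re-engages. This is exactly what Assumption \ref{connectreach} (nominal coreachability w.r.t.\ $F$) buys us: every state reachable after such an excursion is still nominally coreachable to $F$ under $S$, so the reachability argument restarts. A minor technical point is handling states in $Q\setminus F$ that are not reachable from $q_0$ under $S$; these can be ignored by restricting attention to $A|_S$, or equivalently by defining $R_F$ via the coreachability assumption on the full automaton, with the understanding that their values do not affect the induced strategy on $Reach_S(q_0)$.
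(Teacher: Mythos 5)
Your proposal is correct and follows essentially the same route as the paper: the paper likewise treats the B\"uchi condition as an infinitely repeated reachability objective, reuses the descent argument and the weighted shortest-path construction from Theorem \ref{safety} (with the CLF inequality imposed only on $Q\setminus F$), and invokes Assumption \ref{connectreach} to handle the unconstrained actions at states of $F$. The paper in fact offers no more detail than the surrounding prose, so your explicit iteration of the reachability argument after each visit to $F$ fills in exactly the intended reasoning.
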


Since we are able to cast B\"uchi acceptance as an infinitely repeated reachability condition, the methods for calculating $\sigma$ for a given strategy and optimal achievable robustness bounds are identical to the reachability case.

\begin{proposition}\label{buchisigma}
Let $(A,F)$ be a finite B\"uchi automaton satisfying Assumption \ref{connectreach} with constant disturbance bound and let $S$ be a nominally winning memoryless strategy induced from a control Lyapunov function $R_F$.

$S$ is a $f^{-1}(K\overline{\gamma})/\overline{\gamma}$-robust winning strategy where $K$ is the Lipschitz constant of $R_F$.
\end{proposition}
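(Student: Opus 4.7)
The plan is to reduce the Büchi case to the reachability case already handled by Theorem~\ref{safetysigma}. Recall that a Büchi condition is satisfied by a trace $\tau$ precisely when states in $F$ recur infinitely often, so it suffices to show that every outcome of $S$ and an arbitrary disturbance strategy $T$ visits the inflated set $F' = \{q \in Q \mid d(q,F) \leq f^{-1}(K\overline{\gamma})\}$ infinitely often. Since the strategy $S$ induced from $R_F$ selects CLF-satisfying actions at every state in $Q\setminus F$, the chain of inequalities computed in the proof of Theorem~\ref{safetysigma} goes through verbatim and yields, at every step whose source state $q_i$ lies outside $F$,
\[
R_F(q_{i+1}) - R_F(q_i) \;\leq\; K\overline{\gamma} - f(d(q_i,F)).
\]

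From here I would argue as follows. Fix an outcome $\tau = q_0q_1\ldots$ of $S$ and $T$. Whenever $q_i \notin F'$ we have $d(q_i,F) > f^{-1}(K\overline{\gamma})$ and hence $f(d(q_i,F)) > K\overline{\gamma}$, so the right-hand side above is strictly negative; since $Q$ is finite the quantity $f(d(q,F)) - K\overline{\gamma}$ attains a positive minimum $\mu$ on the finite set $Q\setminus F'$, so $R_F$ strictly decreases by at least $\mu$ at every step with source outside $F'$. Because $R_F$ is non-negative, this can happen only finitely many consecutive times: starting from any $q_i$, the trace must reach $F'$ within at most $\lceil R_F(q_i)/\mu\rceil$ steps. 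Applied to $i=0$ this gives one visit to $F'$; applied to every later index, it shows $F'$ is visited infinitely often, which establishes that $S$ is winning for $(A,F')$ and therefore $\sigma$-robust with $\sigma = f^{-1}(K\overline{\gamma})/\overline{\gamma}$.

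The subtle point that needs care is what happens at states where $S$ is allowed to pick an arbitrary action, namely states in $F$: here the CLF inequality need not hold, so we must rule out the system getting stuck outside $F'$ after a single detour through $F$. The argument above handles this transparently, because the "$R_F$ strictly decreases" claim only needs to be invoked at indices $i$ with $q_i \notin F'$, and $F\subseteq F'$ so any step taken from inside $F$ either stays in $F'$ or moves to some $q_{i+1}\notin F'$ from which the strict-descent argument re-engages. Assumption~\ref{connectreach} (nominal coreachability to $F$) is what guarantees that $R_F$ is well-defined and finite-valued on the reachable part of $Q$, so the finite minimum $\mu$ and the bound $R_F(q_0)<\infty$ are both legitimate; this is the only place the connectedness hypothesis enters the Büchi-specific part of the proof.
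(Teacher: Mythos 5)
Your proof is correct and follows the same route the paper (implicitly) takes: the paper gives no separate argument for Proposition~\ref{buchisigma}, justifying it by casting the B\"uchi condition as infinitely repeated reachability and invoking the descent computation of Theorem~\ref{safetysigma}, which is exactly what you do. Your explicit handling of the arbitrary actions at states of $F$ and the uniform decrement $\mu$ on the finite set $Q\setminus F'$ fills in details the paper leaves to the reader, and is sound.
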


The robustness bound for a given nominally winning strategy may be calculated in a manner identical to that presented for reachability automata. The same is true for the optimal and worst case achievable robustness bounds and optimal strategies for a given B\"uchi automaton.

%


\begin{example}\label{buchiex}
Consider the B\"uchi automaton $(A,F)$ with $F = \{q_6\}$ whose nominal behaviour is shown in Figure \ref{exautobuchi}.
Note that this automaton is identical to the reachability
automaton presented in Figure \ref{exauto} (Example \ref{runningex}) with the addition of two new edges beginning at $q_6$.
The distances between the states and the rank functions $R_b:Q \to \mathbb R_0^+$
and $R_a:Q \to \mathbb R_0^+$ are as before; their values
may be found in Tables \ref{extable} and \ref{exvaluetable}.
The two strategies $S_b: Q \to \Sigma$ and $S_a: Q\to \Sigma$ are induced in the same way for states in $Q \setminus F$.
Observe that a control Lyapunov function for a B\"uchi automaton does not specify the value of the induced strategy for terminal states. There are of course two options, namely $a$ and $b$, leading to the states $q_0$ and $q_2$ respectively.

For $S_a$ observe that $R_a(q_0) < R_a(q_2)$ and so we set $S_a(q_6)$ equal to $a$.
For the strategy $S_b$, note that $R_b(q_0) = R_b(q_2)$. For consistency we set $S_b(q_6) = b$. Then the strategy $S_b$ is $5$-robust and $S_a$ is $1$-robust.

\begin{figure}
\begin{center}
 \begin{tikzpicture}[auto]
\node (q-) at (-3,0) [] {};
  \node (q0) at (-2,0) [draw,fill=white,circle,thick,inner sep=0.7] {$\mathbf{q_0}$};
   \node (q2) at (-4,0.1) [draw,fill=white,circle,thick,inner sep=0.7] {$\mathbf{q_2}$};
   \node (q1) at (-3,-1.1) [draw,fill=white,circle,thick,inner sep=0.7] {$\mathbf{q_1}$};
   \node (q3) at (1.5,0) [draw,fill=white,circle,thick,inner sep=0.7] {$\mathbf{q_3}$};
   \node (q4) at (1.7,1.5) [draw,fill=white,circle,thick,inner sep=0.7] {$\mathbf{q_4}$};
  \node (q5) at (4,0) [draw,fill=white,circle,thick,inner sep=0.7] {$\mathbf{q_5}$};
   \node (q6) at (5.5,0) [draw,circle,fill=white,double,thick,inner sep=0.7] {$\mathbf{q_6}$};
\draw [->,thick] (q-) to (q0);
\draw [->,thick] (q0) to node [swap] {$b$} (q1);
\draw [->,thick] (q0) to node [swap] {$a$} (q3);
\draw [->,thick,bend right=20] (q1) to node [swap] {$a,b$} (q6);
\draw [->,thick] (q2) to node [swap] {$b$} (q1);
\draw [->,thick,bend left=20] (q2) to node [pos=0.4] {$a$} (q3);
\draw [->,thick] (q3) to node {$a,b$} (q5);
\draw [->,thick,bend left=20] (q4) to node [swap] {$a,b$} (q6);
\draw [->,thick] (q5) to node {$a,b$} (q6);
\draw [->,thick,bend right=53] (q6) to node [swap] {$a$} (q0);
\draw [->, thick,bend right=60] (q6) to node [swap] {$b$} (q2);
 \end{tikzpicture}
\end{center}
\caption{The undisturbed B\"uchi automaton $A_{\epsilon}$}
\label{exautobuchi}
\end{figure}
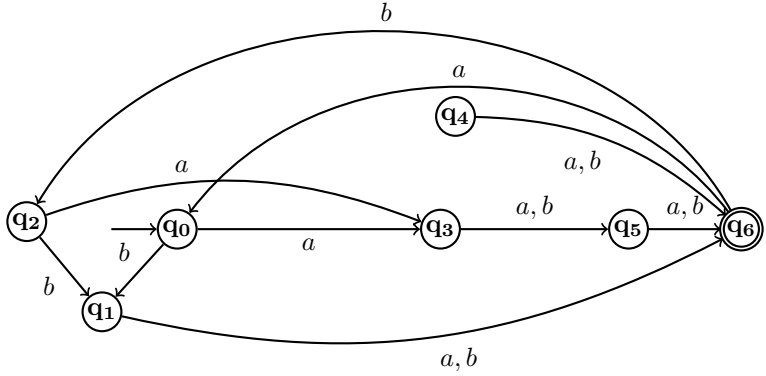
\end{example}

\subsection{Generalized B\"uchi conditions}\label{subgenbu}

We want to generalize the construction of rank functions to parity acceptance conditions.
As a warm-up, we first describe methods for generalized B\"uchi acceptance conditions.
It is a standard argument in automata theory to reduce a generalized B\"uchi automaton to a B\"uchi automaton: the resulting
automaton will have state set $Q \times \{0, \ldots, n-1\}$ where $|\mathscr F|=n$.
So for a system presented as a generalized B\"uchi automaton,
Proposition~\ref{buchi} may be applied to an expanded state space, and winning strategies may be induced.
However, we give an alternate ``direct'' rank function construction based on progress measures that will introduce techniques
useful in the parity case.
Calculating robustness directly for generalized B\"uchi automata has other advantages too: for example, given a distance function $d$ on a generalized B\"uchi automaton $A$, how do we lift $d$ to a metric on the new B\"uchi automaton that makes sense in the context of the original system? This question is likely to be difficult to answer in a satisfactory manner.

Let  $(A,\mathscr F) = ((Q,d), q_0, \Sigma, \delta, X,\gamma, \mathscr F)$ be a generalized B\"uchi automaton with $\mathscr F = \{F_0, \ldots, F_{n-1}\}$.
For \mbox{$i =  0, 1, \ldots, n-1$} let $R_i: Q \to \mathbb R_0^+$ be a (reachability) rank function with respect to the set $F_i$.
Then a \emph{(generalized B\"uchi) rank function} $R: Q \to (\mathbb R_0^+)^n$ is defined by $R(q) =(R_0(q),R_1(q),\ldots,R_{n-1}(q))$ for each $q \in Q$.

We extend the notion of Lipschitz continuity for functions in the obvious way:
a function \mbox{$R:Q \to (\mathbb R_0^+)^{n}$} is \emph{Lipschitz continuous} if there exists $K>0$ such
that for each $i \in \{0,\ldots, n-1\}$
and for all $q,q' \in Q$ it holds that  \[|R_i(q) - R_i(q')| \leq Kd(q,q').\]

As before, if the set $Q$ is finite then every real valued function of this form has this property.

A relation and ordering on $n$-tuples of positive reals is defined as follows.
For every \mbox{$i\in \{0,1,\hdots,n-1\}$} define the preorder $>^i$ on $(\R_0^+)^n$: let $a, b \in (\mathbb R_0^+)^n$ with $a = (a_0, \ldots, a_{n-1})$ and $b = (b_0,\ldots, b_{n-1})$.
Then $a >^i b$ if and only if $a_i > b_i$. We also let $a \geq^i b$ if and only if $a_i \geq b_i$.
Based on $>^i$ we introduce another relation on $(\R_0^+)^n$, denoted by $\RHD^i$ and defined by $a \RHD^i b$ if and only if one of the following two conditions holds:

\[  \ a >^i b \quad \textrm{or} \quad  \ a_{(i-1)\mod n} = 0. \]

Observe that, since the labeling of the sets in $F$ begins at $0$ instead of $1$, the relation $>^0$ corresponds
with the 1st index of the $n$-tuple, $>^1$ corresponds
with the 2nd index, and so on.


\begin{proposition}\label{genbuchirel}
Let $A$ be a finite generalized B\"uchi automaton with acceptance condition $\mathscr F = \{F_0, \ldots, F_{n-1}\}$.
If a trace $\tau = q_0 q_1 q_2 \ldots$  is such that
\begin{align}\label{RHD}
R(q_0) \RHD^0 R(q_1) \RHD^0 R(q_2) \ldots R(q_{i_0}) \RHD^1 R(q_{i_0 +1}) \RHD^1  \ldots \notag\\ \ldots \RHD^1 R(q_{i_{1}}) \RHD^2 R(q_{i_{1}+1}) \RHD^2 \ldots \notag\\
\ldots \RHD^{n-1}  R(q_{i_{n-1}}) \RHD^{0} R(q_{i_{n-1}+1})  \ldots
\end{align}
then $\tau$ satisfies the generalized B\"uchi acceptance condition $\mathscr F$.
\end{proposition}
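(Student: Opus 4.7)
My plan is to establish that for every $j\in\{0,\ldots,n-1\}$ the set $F_j$ is visited infinitely often on $\tau$, so that $\zeta(\tau)\cap F_j\ne\emptyset$ as required by the generalized B\"uchi condition. I would begin by unpacking the relation $\RHD^{i}$ applied to $R(q_k)$ and $R(q_{k+1})$: it asserts that either $R_i(q_k)>R_i(q_{k+1})$, or $R_{(i-1)\bmod n}(q_k)=0$, and the latter is equivalent to $q_k\in F_{(i-1)\bmod n}$ because each rank $R_m$ vanishes precisely on $F_m$. Thus the cyclic pattern~(\ref{RHD}) can be read as alternating between ``make strict rank progress on coordinate $i$'' and ``the source has just visited $F_{(i-1)\bmod n}$'' throughout each block of phase $i$.

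The key local ingredient is finiteness: because $Q$ is finite, each $R_i$ takes only finitely many non-negative values and admits a strictly positive minimum separation between distinct values. Inside a single maximal block of consecutive $\RHD^{i}$-steps I would then argue by cases. If the second disjunct of $\RHD^{i}$ is ever invoked inside the block, the trace visits $F_{(i-1)\bmod n}$ at that step. Otherwise, $R_i$ strictly decreases at every step of the block, and if the block is long enough relative to the finite range of $R_i$ this forces $R_i$ down to $0$ and thus places the trace in $F_i$.

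To upgrade the local analysis to a global infinitely-often statement I would argue by contradiction, supposing that some $F_j$ is visited only finitely often. Then past a finite prefix no state of $\tau$ lies in $F_j$, so the second disjunct of $\RHD^{(j+1)\bmod n}$ is never available inside phase $(j+1)\bmod n$, and $R_{(j+1)\bmod n}$ must strictly decrease at every step of every such phase-block. Since the cyclic hypothesis guarantees that phase $(j+1)\bmod n$ recurs infinitely often, tracking how $R_{(j+1)\bmod n}$ can be ``reset upwards'' only during the intermediate phases, and contrasting this with the finite range of $R_{(j+1)\bmod n}$, should yield the desired contradiction and conclude that $F_j$ is visited infinitely often.

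The main obstacle I expect is precisely the intermediate-phase reset step: the value of $R_{(j+1)\bmod n}$ is entirely unconstrained by $\RHD^{k\bmod n}$ for phases with $k\bmod n \ne (j+1)\bmod n$, so in isolation there is nothing preventing $R_{(j+1)\bmod n}$ from being restored to a large value between successive phase-$(j+1)\bmod n$ blocks. To close this gap cleanly I would chain the contradiction argument around the whole cycle, applying it simultaneously to every index whose $F$-set is visited only finitely often, and invoke the coreachability Assumption~\ref{connectgenbuchi} on each $F_i$ to control how the rank functions $R_i$ interact along the cyclic structure, forcing any putative ``bypass'' of some $F_j$ to be incompatible with the hypothesised pattern.
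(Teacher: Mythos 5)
Your proposal diverges from the paper's proof in a way that opens a genuine gap, and the gap is exactly the one you flag at the end: nothing in your argument prevents $R_{(j+1)\bmod n}$ from being restored to a large value between successive phase-$((j+1)\bmod n)$ blocks, and neither of your proposed repairs closes this. Chaining the contradiction around the whole cycle does not help, because the same reset phenomenon afflicts every coordinate simultaneously; and Assumption~\ref{connectgenbuchi} is a statement about the transition structure of the automaton, not about the values of the rank functions along a trace, so it places no constraint on how $R_i$ varies between blocks. Concretely, take $n=2$ and two states $p,q\notin F_0\cup F_1$ with $R_0(p)>R_0(q)$ and $R_1(q)>R_1(p)$ (perfectly legitimate reachability rank functions): the alternating sequence $R(p)\RHD^0 R(q)\RHD^1 R(p)\RHD^0\cdots$ instantiates the displayed pattern with every phase lasting a single step, each relation witnessed by the first disjunct alone, yet the trace visits no acceptance set. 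So no argument that, like yours, tries to manufacture visits to $F_j$ from strict decrease of $R_j$ \emph{inside} the blocks can succeed: blocks need not be long enough to drive any rank to zero, and your within-block case analysis only ever certifies membership in $F_{(i-1)\bmod n}$ when the second disjunct happens to fire.

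The idea your proposal is missing is the one the paper's very short proof turns on: the \emph{switch points}. At the index $i_k$ where the relation changes from $\RHD^k$ to $\RHD^{k+1}$, the paper reads the pattern as forcing $q_{i_k}\in F_k$ --- the passage to $\RHD^{k+1}$ is licensed by the second disjunct $R_k(q_{i_k})=0$ --- and since the relation index cycles through $0,\ldots,n-1$ forever, each $F_k$ is entered at infinitely many switch points; finiteness of $Q$ then gives $\zeta(\tau)\cap F_k\neq\emptyset$, and that is the entire argument. (For this to be airtight one must read (\ref{RHD}) as requiring the second disjunct to witness the start of each new phase, as in the paper's ``Intuitively'' paragraph; under a purely disjunctive reading of $\RHD^i$ the difficulty you ran into is real and affects the statement itself.) Your proof never examines the switch points, which is why it cannot be completed along the lines you describe.
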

\begin{proof}
Let $\tau$ be a trace of the form given above. By definition, if two consecutive relations in (\ref{RHD}) have different indices (say $k$ and $k+1$) then the state appearing between them must be contained in the set $F_{k}$. Hence
\[ q_{i_0} \in F_0, q_{i_1} \in F_1, \ldots , q_{i_{n-1}} \in F_{n-1}, \ldots,\] and $\tau$ infinitely often features a state in each of the sets in $\mathscr F$.
\end{proof}

Intuitively, a trace of this form is initially moving towards the set $F_0$ via the relation $\RHD^0$. Once a state in the set $F_0$ is reached,
the second part of the definition of $\RHD$ applies and $\RHD^1$ is satisfied until a state in the set $F_1$ is reached. On reaching a state in the set $F_{n-1}$,
the relation returns to $\RHD^0$, and so on.

Note that the other direction does not necessarily hold: a winning trace will not necessarily have the above form.
For example, the trace may visit the sets in a non-sequential order, or may visit multiple states from each set on each pass through the automaton.

For brevity, we introduce some more notation.
Let $d(q,\mathscr F)$ denote the vector valued distance
\[
d(q,\mathscr F) = ( d(q, F_0), d(q, F_1), d(q,F_2), \ldots, d(q, F_{n-1})).
\]
A generalized B\"uchi rank function $R$ is said to be a \emph{control Lyapunov function}
if there exists a monotonically increasing function $f: \mathbb R^+_0 \to \mathbb R_0^+$
with $f(0) = 0$ such that
for every $i \in \{0, 1, \ldots, n-1\}$ and every $q \in Q\setminus F_i$ there exists $a \in \Sigma$ with
\[ R(q^{a\epsilon}) - R(q) \leq^i - f(d(q, \mathscr F)). \]

For a fixed $i$, the function $R_i$ is a reachability control Lyapunov function with respect to the set $F_i$. Hence every state $q \in Q$ is coreachable with respect to the set $F_i$ for every $i \in \{0,\ldots, n-1\}$ and the automaton $A$ satisfies Assumption \ref{connectgenbuchi}. To see that this is necessary, consider for example a state $q \in Q$
from which the set $F_{i}$ is not reachable for some $i>0$. Then any state coreachable with respect to $q$, and any state reachable from $q$, may not
appear on a winning trace. Hence all such states are redundant (including $q$). These definitions are the natural extension of those given for reachability and B\"uchi automata.

Generalized B\"uchi automata do not admit memoryless strategies; a winning strategy must keep track of the index $i+1$ where $i$ is the index of the last terminal set $F_i$
which was visited on the trace.
Therefore a strategy for a generalized B\"uchi automaton $(A,\mathscr F)$
is a function \mbox{$S: Q \times \{0,\ldots, n-1\} \to 2^{\Sigma}$} where for every
$i\in \{0,\hdots,n-1\}$, the restriction $S(\cdot,i)$ is a memoryless reachability strategy, and may be induced from $R_i$.

\begin{proposition}\label{genbuchi}
Let $(A,\mathscr F)$ be a finite generalized B\"uchi automaton satisfying Assumption \ref{connectgenbuchi} and let \mbox{$S:Q \times \{0,\ldots,n-1\} \to 2^{\Sigma}$} be a memoryless strategy.
Then $S$ is nominally winning if and only if there exist Lipschitz continuous rank functions
$R_i: Q \to \mathbb R_0^+$ for $i =0, 1, \ldots, n-1$
 and a control Lyapunov function $R=(R_0,\hdots,R_{n-1})$ for $(A,\mathscr F)$ such that $S$ may be induced from $R$.
\end{proposition}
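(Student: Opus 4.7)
The plan is to reduce the generalized B\"uchi case to the reachability case (Theorem \ref{safety}) by treating each component $R_i$ of $R$ as a reachability control Lyapunov function for $F_i$. Unpacking the definitions of $\leq^i$ and of $f$ applied componentwise, the generalized B\"uchi CLF inequality $R(q^{a\epsilon}) - R(q) \leq^i -f(d(q,\mathscr F))$ is precisely the reachability CLF inequality $R_i(q^{a\epsilon}) - R_i(q) \leq -f(d(q, F_i))$ restricted to the $i$-th coordinate, and the restricted strategy $S(\cdot,i)$ is the reachability strategy induced by $R_i$. Proposition \ref{genbuchirel} is then the glue that will lift chained reachability arguments to a certificate of the full generalized B\"uchi condition.

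For the $(\Leftarrow)$ direction, I would fix a nominal outcome $\tau = q_0 q_1 \ldots$ of the strategy $S$ induced from $R$, starting with target index $0$. While the current index is $i$, each chosen action lies in $S(\cdot, i)$ and satisfies the reachability CLF decrease for $R_i$, so by the argument in Theorem \ref{safety}, $\tau$ reaches some $q \in F_i$ after finitely many steps. The index then advances to $(i+1) \bmod n$ and the argument restarts with $R_{(i+1) \bmod n}$ targeting $F_{(i+1) \bmod n}$. Iterating this, $\tau$ matches the chain form of inequality (\ref{RHD}), so by Proposition \ref{genbuchirel} it satisfies $\mathscr F$.

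For the $(\Rightarrow)$ direction, I would apply Theorem \ref{safety} separately for each $i$. Since $S$ is nominally winning, and Assumption \ref{connectgenbuchi} guarantees that every state of $A$ is nominally coreachable to $F_i$, the restriction $S(\cdot, i)$ is a nominally winning memoryless reachability strategy for $F_i$; Theorem \ref{safety} then yields a Lipschitz continuous reachability control Lyapunov function $R_i$ with decrease function $f_i$ such that $S(\cdot, i)$ is induced from $R_i$. Setting $R = (R_0, \ldots, R_{n-1})$ and $f = \min\{f_0, \ldots, f_{n-1}\}$ (still monotone increasing and vanishing at $0$) gives a single Lipschitz continuous generalized B\"uchi CLF inducing $S$.

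The main obstacle is the bookkeeping around the index advancement in the forward direction. After $\tau$ visits some $q \in F_i$, the next transition need not produce a strict decrease in $R_{(i+1)\bmod n}$, so strict $>^{(i+1)\bmod n}$ may fail at this boundary. I must verify that the second disjunct in the definition of $\RHD^{(i+1)\bmod n}$---namely that the $i$-th coordinate of the source equals $0$, which holds because $R_i(q)=0$ whenever $q \in F_i$---cleanly covers this boundary transition so the chain pattern demanded by Proposition \ref{genbuchirel} is preserved without gap. Past this boundary, the usual reachability decrease argument resumes and the iteration proceeds as planned.
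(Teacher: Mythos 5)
Your proposal is correct and follows essentially the same route the paper intends: the paper's own proof is just ``straightforward generalization of Proposition \ref{buchi},'' i.e.\ treat each $R_i$ as a reachability control Lyapunov function for $F_i$, let $S(\cdot,i)$ be the induced reachability strategy, and use the $\RHD^i$ chain of Proposition \ref{genbuchirel} to certify the generalized B\"uchi condition. You supply the details the paper omits --- in particular correctly noting that the boundary transition at a state $q\in F_i$ is covered by the second disjunct $R_i(q)=0$ in the definition of $\RHD^{(i+1)\bmod n}$, and that taking $f=\min\{f_0,\ldots,f_{n-1}\}$ yields a single decrease function --- so no further changes are needed.
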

\begin{proof}
Straightforward generalization of Proposition \ref{buchi}.
\end{proof}

For automata with constant disturbance bounds we have the following.

\begin{proposition}\label{genbuchisigma}
Let $(A,\mathscr F)$ be a finite generalized B\"uchi automaton satisfying Assumption \ref{connectgenbuchi} with constant disturbance bound $\overline{\gamma}$ and let $S$ be a nominally winning memoryless strategy induced from a control Lyapunov function $R_{\mathscr F}$.
The strategy $S$ is $\sigma$-robust where
\[ \sigma = f^{-1}(K\overline{\gamma})/\overline{\gamma}\]
 for $K = \max_{i=0,\ldots,n-1} K_i$ where $K_i$ is the Lipschitz constant of the rank function $R_i$.
\end{proposition}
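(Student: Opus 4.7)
The plan is to reduce the analysis to $n$ successive applications of the reachability argument from Theorem~\ref{safetysigma}, one for each component of the generalized B\"uchi rank function $R=(R_0,\ldots,R_{n-1})$. Fix a disturbance strategy $T$ and consider an arbitrary outcome $\tau = q_0 q_1 q_2 \ldots$ of $S$ and $T$. Since $S$ is induced from the control Lyapunov function $R$, at each state $q$ the strategy is currently targeting some index $i\in\{0,\ldots,n-1\}$ and selects an action $a\in S(q,i)$ that makes $R_i(q^{a\epsilon}) - R_i(q) \leq -f(d(q,\mathscr F))$; in particular the $i$-th component strictly decreases under nominal behaviour whenever $q\notin F_i$.

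First I would repeat verbatim the triangle-inequality plus Lipschitz-continuity chain from the proof of Theorem~\ref{safetysigma}, applied to the component $R_i$ currently being targeted:
\begin{align*}
R_i(q^{ax}) - R_i(q) &= (R_i(q^{ax}) - R_i(q^{a\epsilon})) + (R_i(q^{a\epsilon}) - R_i(q)) \\
&\le |R_i(q^{ax}) - R_i(q^{a\epsilon})| - f(d(q,F_i)) \\
&\le K_i\, d(q^{ax},q^{a\epsilon}) - f(d(q,F_i)) \\
&\le K\overline{\gamma} - f(d(q,F_i)),
\end{align*}
where $K=\max_i K_i$. As in the reachability case, as long as $f(d(q,F_i)) > K\overline{\gamma}$, the value $R_i$ strictly decreases along $\tau$, so after finitely many steps the trace must reach a state $q$ with $d(q,F_i) \leq f^{-1}(K\overline{\gamma}) = \sigma\overline{\gamma}$, i.e.\ a state in $F_i'$.

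Next I would argue that this guarantees infinite visits to every inflated set $F_i'$. When the trace reaches $F_i'$, the memoryless strategy advances its target index to $i+1 \pmod n$, so $S$ begins inducing decrease of $R_{i+1}$ via its nominal action choices. Because Assumption~\ref{connectgenbuchi} guarantees that $A$ is nominally coreachable with respect to every $F_j$, the control Lyapunov property of $R_{i+1}$ holds uniformly on all of $Q\setminus F_{i+1}$, and the same Lipschitz/triangle argument now forces the trace to enter $F_{i+1}'$ in finitely many steps. Iterating this cyclically, $\tau$ visits each $F_i'$ infinitely often, so $\tau$ satisfies the inflated acceptance condition $\mathscr F' = \{F_0',\ldots,F_{n-1}'\}$ of Definition~\ref{sigmarobconst}. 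Since $T$ was arbitrary, $S$ is $\sigma$-robust with $\sigma = f^{-1}(K\overline{\gamma})/\overline{\gamma}$.

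The main subtlety, and what I would take most care over, is the transition between targets. Unlike the reachability or single-B\"uchi case, once $\tau$ enters $F_i'$ the component $R_i$ may no longer be decreasing under the new action choices dictated by $S(\cdot, i+1)$. I would handle this by separating each ``phase'' of the outcome (between two consecutive target-index updates) and applying the bounded-degradation argument independently within each phase, using Proposition~\ref{genbuchirel} to ensure that the cyclic progression of targets suffices for acceptance. Finiteness of each phase follows from $R_{i+1}$ being bounded below by $\alpha(d(\cdot,F_{i+1}))$ combined with the strict per-step decrease outside of $F_{i+1}'$; this yields an a priori bound on the length of each phase in terms of the initial $R_{i+1}$ value and the gap $f(\sigma\overline{\gamma}) - K\overline{\gamma}$, closing the argument.
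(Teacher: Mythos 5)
Your proof is correct and follows essentially the same route as the paper: the paper simply invokes the single-set B\"uchi bound (Proposition~\ref{buchisigma}) componentwise to get $\sigma_i$-robustness with respect to each $F_i$ and takes the maximum over $i$, which is exactly your per-component Lipschitz/triangle-inequality argument with $K=\max_i K_i$. The extra care you take over the phase transitions between target indices is a detail the paper leaves implicit but does not change the argument.
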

\begin{proof}
Assume that $R_{\mathscr F}$ is a control Lyapunov function for $(A,\mathscr F)$ and let $S$ be a nominally winning memoryless strategy induced from $R_{\mathscr F}$. Let $T$ be a disturbance strategy.
Proposition \ref{buchisigma} implies that
\[ R(q^{ax}) - R(q) \leq^i K_i\overline{\gamma} - f(d(q, \mathscr F))\]
for every $q \in \tau \setminus F_i$ where $\tau \in Q^{\omega}$ is any outcome resulting from $S$ and $T$ and with
$K_i$ the Lipschitz constant of $R_i$ with respect to $d$. Hence $S$ is
$\sigma_i$-robust for $\sigma_i = f^{-1}(K_i\overline{\gamma})/\overline{\gamma}$ with respect to $F_i$ and therefore the robustness of $S$ is certainly bounded by $\sigma$ as required.
\end{proof}

For generalized B\"uchi automata with state dependent disturbance bounds the verification of robustness
for a strategy and the calculation of optimal robustness bounds is done in a similar manner to the reachability case.
Let $(A, \mathscr F)$ be a generalized B\"uchi automaton, and assume $Q= \set{q_0,\ldots,q_{m-1}}$ and $|\mathscr F| = n$.
Instead of a vector, we define $\opt^0$ to be an $m$ by $n$ matrix.
Letting $\opt^0(j,k)$ denote the entry in the $j$th row and $k$th column of $\opt^0$, we let $\opt^0(j,k) = d(q_j,F_{k-1})$ for $j=1,\ldots,m$ and $k=1,\ldots,n$.
This is the natural generalization of the definition for reachability and B\"uchi conditions where only one
terminal set is considered.
Then the monotonic function $\underline{g}: (\mathbb R_0^+)^{m \times n} \to (\mathbb R_0^+)^{m \times n}$
is defined on each index $(j,k)$ of the matrix $\opt^0$ by
\[
\underline{g}(\opt)(j,k) = \min \left( \opt(j,k),\min_{a\in \Sigma} \max_{i \in \Post_a(q_j)} \opt(i,k) \right).
\]
That the operator repeatedly applied beginning with $\opt^0$ converges to the required value follows easily from the reachability case.

Given a nominally winning strategy $S$ for a finite generalized B\"uchi automaton $(A,\mathscr F)$ the
robustness bound $\sigma$ may be recovered by first calculating $\underline{opt}^*$ for the restricted automaton $A|_S$. Then
\[
\sigma= \frac{\max_{k = 1, \ldots, n}\underline{opt}^*(0,k)}{\overline{\gamma}}.
\]
For optimal strategy synthesis we calculate the minimal achievable robustness bound as
\[
\sigma_{\min} = \frac{\max_{k = 1, \ldots, n}\underline{opt}^*(0,k)}{\overline{\gamma}}.
\]
The method of induction of the strategy $S$ is a straightforward generalization of the approach presented for reachability and B\"uchi automata.

\subsection{Parity conditions}\label{subparity}

The simple notions of rank and progress defined previously are insufficient to capture the complexity of parity acceptance conditions.
Instead we generalize progress measures for parity games \cite{klarlund90,Namjoshi01}. Note that, for clarity of exposition, all results in this section are presented for deterministic strategies only. The extension to non-deterministic strategies is straightforward.

Recall that Assumption \ref{connectparity} asks only that every state in $q$ is nominally coreachable with respect to some set of even parity $F_{2i}$, and if $q$ has odd parity, we assume that $2i$ is less than the parity of $q$. This is the least restrictive generalization of the coreachability assumptions made for simpler acceptance conditions. A consequence is that we extend the distance function $d$ to allow states of infinite distance from each other. Let $\overline{\mathbb R}_0^+ = \mathbb R_0^+ \cup \{\infty\}$, the extended positive reals. Then $d:Q \times Q \to \overline{\mathbb R}_0^+$ is an  \emph{extended distance function}.

Let $(A,\mathscr F) = ((Q, d), q_0, \Sigma, \delta,X, \gamma, \mathscr F)$ be a parity automaton with \\ $\mathscr F = \{F_0, F_1, \ldots, F_{2n+1} \}$. Denote by $d(q,\mathscr F)$
the vector valued distance
\[
d(q,\mathscr F) = ( d(q, F_0), d(q, F_2), d(q,F_4), \ldots, d(q, F_{2n})).
\]

Let $\succ$ denote the lexicographic ordering on $n+1$ tuples over the extended positive real numbers, and let $\succ^i$ denote the lexicographic ordering
restricted to the first $i$ components. We define $\succeq^i$ in the obvious way: $a \succeq^i b$ if $a$ is either greater than $b$ in the lexicographic ordering
or equal to $b$. For $a, b \in (\overline{\mathbb R}_0^+)^{n+1}$ define $a \rhd^q b$ if and only if there exists $i \in \{0,1, \ldots, n\}$ such that either
\begin{description}
\item[(i)] $q \in F_{2i +1}$ and $a \succ^i b$ or
\item[(ii)] $q \in F_{2i}$ and $a \succeq^i b$ or
\item[(iii)] $q \not\in \bigcup_{j \in \{0.\ldots, 2n+1\}} F_j$ and $a \succ b$.
\end{description}
We call $\rhd$ the \emph{parity progress measure}.

A (parity) rank function $R_{\mathscr F}:Q \to (\overline{\mathbb R}_0^+)^{n+1}$ is a function with $R_{\mathscr F}^i(q) = 0$ if and only if $q \in F_{2i}$ (where the notation
$R_{\mathscr F}^i(q)$ denotes the $i$th component of the image of $q$ under $R_{\mathscr F}$) and there exists a monotonically increasing function
$\alpha:\overline{\mathbb R}_0^+ \to \overline{\mathbb R}_0^+$ such that $\alpha(0) = 0$ and
\[ \alpha(d(q, F_i)) \leq R^i_{\mathscr F}(q) \]
for all $q \in Q$, $i \in \{0,\ldots, n\}$. Hence a parity rank function consists of $n+1$ reachability rank functions defined upon the extended positive real numbers.

Let $\overline{Q} \subseteq (F_0 \cup F_{2} \cup \ldots \cup F_{2n})$ denote the set of states of even parity from which a state of lower or equal even parity cannot be reached.
That is, $\overline{Q}$ contains all states $q \in F_{2i}$ for some $i \in \{0,\ldots,n\}$ such that there does not exist $k \leq i$ with some state $q' \in F_{2k}$ reachable from $q$.

A rank function $R_{\mathscr F}$ for a parity automaton $(A,\mathscr F)$ is a \emph{control Lyapunov function} if there exists some monotonically increasing function
$f: (\overline{\mathbb R}_0^+)^{n+1} \to (\overline{\mathbb R}_0^+)^{n+1}$ satisfying\footnote{where $0^{n+1}$ denotes the $n+1$ tuple consisting of zeroes.} $f(0^{n+1}) = 0^{n+1}$
such that for every $j \in \{1,\ldots, 2n+1\}$ and every $q \in F_{j} \setminus \overline{Q}$ there exists $a \in \Sigma$ with
\begin{equation} \label{parityineq}
R_{\mathscr F}(q^{a\epsilon}) - R_{\mathscr F}(q) \preceq^{i}  -f(d(q, \mathscr F))
\end{equation}
for some $2i \leq j$.

The next proposition demonstrates that the parity progress measure $\rhd$ is correct. Since the parity acceptance condition looks only at infinite behaviour on a trace, and we consider only automata with finite state sets, necessarily any infinite trace consists of a finite simple (loop-free) prefix followed by an infinite sequence of repeated loops. This observation is key to the proof.

\begin{proposition}\label{rhd}
Let $\tau = q_0 q_1 q_2 \ldots \in Q^{\omega}$ be an infinite trace of the parity automaton $(A,\mathscr F)$. Then if
\begin{equation}\label{rhdeq}
R_{\mathscr F}(q_0) \rhd^{q_0} R_{\mathscr F}(q_1) \rhd^{q_1} R_{\mathscr F}(q_2) \ldots,
\end{equation}
$\tau$ is winning with respect to $\mathscr F$.
Moreover, if the set of indices $I \subset \mathbb N$ such that
$R_{\mathscr F}(q_i) \rhd^{q_i} R_{\mathscr F}(q_{i+1})$ does not hold is finite, then $\tau$ will be winning with respect to $\mathscr F$.
\end{proposition}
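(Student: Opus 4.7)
The plan is to argue by contradiction, using the finiteness of $Q$ to make lex order on the image of $R_{\mathscr F}$ well-founded. Suppose $\tau$ is not winning; then the minimum parity $p$ among the states in $\zeta(\tau)$ is odd, say $p = 2k+1$. Since states of parity strictly less than $p$ occur only finitely often along $\tau$, there is an index $N$ such that for every $i \geq N$ the state $q_i$ is either uncolored or has parity at least $2k+1$.

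The key step is a case analysis on the parity of $q_i$ in the definition of $\rhd^{q_i}$ to show that, for every $i \geq N$, we have $R_{\mathscr F}(q_i) \succeq^k R_{\mathscr F}(q_{i+1})$, with strict inequality $\succ^k$ whenever $q_i \in F_{2k+1}$. When $q_i \in F_{2k+1}$ the condition directly furnishes $\succ^k$. When $q_i \in F_{2j+1}$ or $F_{2j}$ with $j > k$, one has $\succ^j$ or $\succeq^j$, which weakens to at least $\succeq^k$: either the first coordinate of disagreement lies within the first $k+1$ positions, yielding $\succ^k$, or the first $k+1$ coordinates coincide outright, yielding equality. The uncolored case is handled identically, starting from the strict full lex order $a \succ b$.

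Because some state $q^* \in F_{2k+1}$ is visited infinitely often after $N$, the sequence of first-$(k{+}1)$-coordinate projections of $R_{\mathscr F}(q_i)$ for $i \geq N$ is non-increasing in lex order and strictly decreases at infinitely many indices. Since $Q$ is finite, the image of $R_{\mathscr F}$ is a finite subset of $(\overline{\mathbb R}_0^+)^{n+1}$, and lex order on any finite set is well-founded; no infinite strictly lex-decreasing chain can exist, a contradiction. Hence $p$ must be even, and $\tau$ is winning.

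For the moreover clause, let $M$ exceed the largest index where the $\rhd$-step fails. The tail $q_M q_{M+1} \ldots$ then satisfies the hypothesis of the first part, and since its set of infinitely often visited states coincides with $\zeta(\tau)$, applying the already-proved statement to the tail shows that $\tau$ itself is winning. I expect the main obstacle to be the case analysis in the weakening step---in particular, cases (ii) and (iii) where one must resist the natural urge to conclude a strict inequality $\succ^k$ and instead accept that only $\succeq^k$ survives in general, with the strict decrease coming solely from visits to states in $F_{2k+1}$.
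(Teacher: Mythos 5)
Your proof is correct and rests on the same key observation as the paper's own proof: once every state on the tail has parity at least $2k+1$ (or is uncolored), each relation $\rhd^{q_i}$ weakens to $\succeq^k$ on the first $k+1$ coordinates, with strictness exactly at visits to $F_{2k+1}$. The only difference is how the contradiction is closed---the paper shows that any $\rhd$-chained \emph{loop} must have even minimal parity and then appeals to the trace eventually consisting of repeated loops, whereas you run an infinite-descent argument on the tail directly using finiteness of the image of $R_{\mathscr F}$ (which, if anything, makes the paper's ``this is sufficient'' step more explicit); the moreover clause is handled identically in both.
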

\begin{proof}
Let $p_1, \ldots, p_m \in Q$ be such that
\begin{eqnarray}\label{rhdproof}
 R_{\mathscr F}(p_1) \rhd^{p_1} R_{\mathscr F}(p_2) \rhd^{p_2} \ldots \notag\\
 \ldots \rhd^{p_{m-2}} R_{\mathscr F}(p_{m-1}) \rhd^{p_{m-1}} R_{\mathscr F}(p_m) \rhd^{p_m} R_{\mathscr F}(p_1)
\end{eqnarray}
and let
\[ k = \min_{j \in \{1,\ldots,m\}} \{i \mid p_j \in F_{2i} \vee p_j \in F_{2i+1} \}.\]
By definition
\begin{equation}\label{rhdk}
 R_{\mathscr F}(p_1) \succeq^k R_{\mathscr F}(p_2) \succeq^k \ldots \succeq^k R_{\mathscr F}(p_m) \succeq^k R_{\mathscr F}(p_1).
\end{equation}
 Let $j \in \{1,\ldots, m\}$ be such that $p_j \in F_{2k+1}$ or $p_j \in F_{2k}$. If $p_j \in F_{2k+1}$ then one of the inequalities
in (\ref{rhdk}) must be strict and hence $R_{\mathscr F}(p_1) \succ^k R_{\mathscr F}(p_1)$, a contradiction. Therefore $p_j \in F_{2k}$ and the least parity appearing in the loop
$p_1 \ldots p_m p_1$ must be even.
This is sufficient to prove that any infinite trace $\tau \in Q^{\omega}$ satisfying (\ref{rhdeq}) also satisfies the parity
condition $\mathscr F$.

Now assume that the set $I$ is non-empty and has finite cardinality. Since $I$ is finite there exists some $N \in \mathbb N$ such that for all $k \geq N$, $R_{\mathscr F}(q_k) \rhd^{q_k} R_{\mathscr F}(q_{k+1})$ holds. Let $\tau_N$ denote
the suffix of $\tau$ whose first state is $q_N$. Then by Proposition \ref{rhd} the lowest parity in the set $\zeta(\tau_N)$ is even, and since $\zeta(\tau) = \zeta(\tau_N)$ the result follows.
\end{proof}

As we observed before the proposition, a nominally winning infinite trace of a finite state parity automaton is necessarily comprised of a finite simple prefix followed by an infinite series of repeated loops. It is then straightforward to argue that the least parity appearing on any such loop must be even. Continuing on this line of thinking one observes that any such repeated loop comprising part of an infinite trace satisfying (\ref{rhdeq}) must consist entirely of even states. Hence a trace of this form will feature odd states only finitely often.

\begin{proposition}\label{CLFtorhd}
 Let $\tau = q_0 q_1 q_2 \ldots \in Q^{\omega}$ be an infinite trace of the parity automaton $(A,\mathscr F)$. If
\begin{equation} \label{parityineq2}
R_{\mathscr F}(q_{k+1}) - R_{\mathscr F}(q_k) \preceq^{i}  -f(d(q_k, \mathscr F))
\end{equation}
for all $ q_k \in Q \setminus \bar{Q}$ appearing on $\tau$ and $\bar{Q}$ is finite then $\tau$ satisfies $\mathscr F$.
\end{proposition}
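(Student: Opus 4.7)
The plan is to reduce to Proposition~\ref{rhd}: once $R_{\mathscr F}(q_k)\rhd^{q_k} R_{\mathscr F}(q_{k+1})$ is shown to hold at all but finitely many positions along $\tau$, the conclusion follows immediately, and the finiteness of $\bar Q$ is exactly what lets me bound the indices where the progress relation might fail.

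I first argue that $\tau$ visits $\bar Q$ only finitely often. A state $q\in\bar Q\cap F_{2i}$ has, by definition, no state of parity $\le 2i$ reachable from it via a nontrivial trace; since $q$ itself is such a state, $q$ cannot appear twice on $\tau$. Hence each entry of $\tau$ into $\bar Q$ lands on a previously unvisited element, and the total number of visits is bounded by $|\bar Q|<\infty$. Fix $N$ past the last such visit, so that for every $k\ge N$ the hypothesis applies and produces some $i_k$ with $2i_k\le j_k$ (where $j_k$ is the parity of $q_k$) satisfying
\[ R_{\mathscr F}(q_{k+1})-R_{\mathscr F}(q_k)\preceq^{i_k} -f(d(q_k,\mathscr F)). \]
The core step then converts this inequality into $R_{\mathscr F}(q_k)\rhd^{q_k} R_{\mathscr F}(q_{k+1})$ by a case split on $j_k$. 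The observation driving the argument is that $-f(d(q_k,\mathscr F))$ is strictly negative in every coordinate $l$ with $q_k\notin F_{2l}$, since $d(q_k,F_{2l})>0$ in that case and $f$ is monotone with $f(0^{n+1})=0^{n+1}$. At odd parity $j_k=2m+1$, pairwise disjointness of $\mathscr F$ forces $q_k\notin F_{2l}$ for every $l$, so the leading coordinate is strictly negative, the hypothesis $\preceq^{i_k}$ sharpens to $\prec^{i_k}$, and the elementary fact ``strict lex on a shorter prefix implies strict lex on any longer prefix'' (the first discrepant coordinate already lies in the common prefix) yields $R_{\mathscr F}(q_k)\succ^m R_{\mathscr F}(q_{k+1})$ with $m\ge i_k$, as $\rhd^{q_k}$ demands. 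At even parity $j_k=2m$ only the $m$-th coordinate of $-f(d(q_k,\mathscr F))$ may vanish; if $m\ge 1$ the leading coordinate is still strictly negative and the same propagation gives $\succ^m$, while if $m=0$ the condition $\succeq^0$ is vacuous. Uncolored $q_k$ are handled via case~(iii) of $\rhd^{q_k}$ by the same strict-lex argument. Finally I apply Proposition~\ref{rhd} to the suffix $q_N q_{N+1}\ldots$, which shares its $\zeta$-set with $\tau$, to conclude that $\tau$ satisfies $\mathscr F$.

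The most delicate piece is the lex-order bookkeeping in the case split, since the hypothesis only constrains the first $i_k$ components of the rank difference while the $\rhd^{q_k}$ relation may require control of the first $m\ge i_k$ components. The strictly negative leading coordinate of $-f(d(q_k,\mathscr F))$ is what drives the propagation, and the edge cases at the lowest parities (in particular $F_0$ and $F_1$, where $\succeq^0$ and $\succ^0$ are degenerate) each resolve either by becoming vacuous or by the same propagation argument.
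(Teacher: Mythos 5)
Your proof is correct and follows essentially the same route as the paper's: establish $R_{\mathscr F}(q_k)\rhd^{q_k}R_{\mathscr F}(q_{k+1})$ outside $\bar Q$ by a case split on the parity of $q_k$, observe that no state of $\bar Q$ can recur on $\tau$ so the progress relation fails only finitely often, and invoke Proposition~\ref{rhd}. Your write-up is if anything slightly more careful than the paper's (the explicit lexicographic propagation from $\preceq^{i_k}$ to the index demanded by $\rhd^{q_k}$, and the uncolored case~(iii)), but the underlying argument is the same.
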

\begin{proof}
Let $q_k \in Q \setminus \bar{Q}$. If $q_k \in F_{2i}$ for some $i$ then  (\ref{parityineq2}) implies that
$R_{\mathscr F}(q_k) \succeq^i R_{\mathscr F}(q_{k+1})$ and $q_{k} \rhd^{q_k} q_{k+1}$ as required.

Instead assume that $q_k \in F_{j}$ for some $j$ odd. The function $f(d(q_k,\mathscr F))$ restricted to any \mbox{$i \in \{0,\ldots, n\}$} is non-zero, and so
$R_{\mathscr F}(q_{k+1}) \prec^i R_{\mathscr F}(q_k)$ for some $2i<j$ and hence for all $l$ satisfying $i \leq l < j$. Therefore $R_{\mathscr F}(q_{k}) \rhd^{q_k} R_{\mathscr F}(q_{k+1})$.

Finally let $q_k \in \bar{Q}$ and $q_k \in F_{2i}$. Then $q_k$ and $q_{k+1}$ need not satisfy (\ref{parityineq2})
and so may not satisfy the parity measure $\rhd$.
Since $q \in \bar{Q}$ there exists no $l > k$ such that $q_l = q_k$. Indeed, if this were the case, it would contradict
our assumption that a finite trace connecting $q_k$ to a state of lower or equal parity does not exist.
Since the cardinality of $\bar{Q}$ is finite there exist only a finite number
of indices $l \in \mathbb N$ such that $R_{\mathscr F}(q_l) \rhd^{q_l} R_{\mathscr F}(q_{l+1})$ does not hold
and Proposition \ref{rhd} yields the result.
\end{proof}


Given a control Lyapunov function $R_{\mathscr F}$ for a parity automaton $(A,\mathscr F)$ a deterministic memoryless strategy $S:Q \to \Sigma$ induced from $R_{\mathscr F}$
may be defined as follows. Let $q \in Q$.
\begin{description}
\item[(i)] If $q \in Q \setminus \bar{Q}$ choose $S(q) = a$ such that $R_{\mathscr F}(q^{a\epsilon})$ satisfies (\ref{parityineq2}) and is minimal with respect
 to the lexicographic ordering.
\item[(ii)] If $q \in \bar{Q}$ set $S(q) = a$ for any $a \in \Sigma$.
\end{description}

\begin{theorem}\label{parity}
Let $(A,\mathscr F)$ be a finite parity automaton satisfying Assumption \ref{connectparity} and let $S:Q \to \Sigma$ be a deterministic memoryless strategy. Then $S$ is nominally winning if and only if there exists a Lipschitz continuous control Lyapunov function $R_{\mathscr F}$ such that $S$ may be induced from $R_{\mathscr F}$.
\end{theorem}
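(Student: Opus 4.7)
The plan is to prove both implications of the biconditional separately, leveraging the progress-measure machinery already set up in Propositions~\ref{rhd} and~\ref{CLFtorhd}.

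\textbf{Forward direction ($\Leftarrow$).} Suppose $R_{\mathscr F}$ is a Lipschitz continuous control Lyapunov function and $S$ is the deterministic memoryless strategy induced from it. Let $\tau = q_0 q_1 q_2 \ldots$ be any nominal outcome of $S$. By construction of the induced strategy, for every $q_k \in Q \setminus \bar{Q}$ appearing on $\tau$, choosing $a = S(q_k)$ satisfies~\eqref{parityineq2}. Since $Q$ is finite, $\bar{Q}$ is also finite, so Proposition~\ref{CLFtorhd} applies directly and $\tau$ satisfies the parity condition $\mathscr F$. Hence $S$ is nominally winning. This direction is essentially a one-line consequence of the preceding propositions.

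\textbf{Reverse direction ($\Rightarrow$).} This is the substantive direction and is handled by generalizing the weighted-digraph construction of Theorem~\ref{safety} and Proposition~\ref{buchi}. Given a nominally winning deterministic memoryless strategy $S$, I would construct $R_{\mathscr F}$ component by component. For each $i \in \{0,\ldots,n\}$, pick a monotonically increasing $\eta_i : \overline{\mathbb R}_0^+ \to \overline{\mathbb R}_0^+$ with $\eta_i(0)=0$, build a weighted digraph $G_i = (Q,E_i)$ whose arcs $(q, x_q^i, q')$ carry weight $x_q^i = \eta_i(d(q,F_{2i}))$ exactly when $\delta(q,S(q),\epsilon) = q'$, and then set
\[
R_{\mathscr F}^i(q) = \inf_{\tau \in \tau^S(q,\,T_i)} \sum_{q' \in \tau} x_{q'}^i,
\]
where $T_i$ is the set of even-parity states that $S$ reaches from $q$ with minimum parity at most $2i$ along the unique nominal trace, and where the infimum is $\infty$ when no such finite trace exists. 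The full rank function is $R_{\mathscr F}(q) = (R_{\mathscr F}^0(q),\ldots,R_{\mathscr F}^n(q))$.

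\textbf{Verification steps and main obstacle.} I would then verify in turn: (a) $\alpha(d(q,F_{2i})) \leq R_{\mathscr F}^i(q)$ for a suitable $\alpha$, and $R_{\mathscr F}^i(q) = 0$ iff $q \in F_{2i}$, making $R_{\mathscr F}$ a genuine parity rank function; (b) Lipschitz continuity, which is automatic because $Q$ is finite; (c) the control Lyapunov inequality~\eqref{parityineq}, namely that for each $q \in F_j \setminus \bar{Q}$ the action $a = S(q)$ achieves $R_{\mathscr F}(q^{a\epsilon}) - R_{\mathscr F}(q) \preceq^i -f(d(q,\mathscr F))$ for some $2i \leq j$; and (d) that $S$ coincides with the strategy induced from $R_{\mathscr F}$. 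The main obstacle is (c), which requires careful bookkeeping of indices: because $S$ is nominally winning, any nominal trace starting at a state of odd priority $j$ must eventually reach an even-priority state $F_{2i}$ with $2i < j$ (otherwise a cycle of odd minimum priority would exist in $A|_S$, contradicting that $S$ is winning). This is precisely what forces a strict decrease in the $i$th component, exactly matching the parity progress measure $\rhd$. For states in $\bar{Q}$ the inequality is vacuous by definition, and any choice of $S(q)$ is admissible there, consistent with case (ii) of the definition of the induced strategy.
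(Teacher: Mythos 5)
Your overall architecture matches the paper's: the ($\Leftarrow$) direction is discharged by Proposition~\ref{CLFtorhd} exactly as in the paper, and the ($\Rightarrow$) direction is a weighted-digraph, shortest-path construction generalizing Theorem~\ref{safety}. The detail that differs is that you build each component $R^i_{\mathscr F}$ independently as a weighted distance to a target set $T_i$, whereas the paper partitions $Q$ into sets $\overline{F_{2i}}$ according to the \emph{least} even parity reachable under $S$, assigns every component $j\geq i$ the same weighted distance to that single set $F_{2i}$, and pads every component $j<i$ with $\infty$. That variation would be acceptable in itself, but your justification of the key verification step (c) has a genuine gap.

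The claim that ``any nominal trace starting at a state of odd priority $j$ must eventually reach an even-priority state $F_{2i}$ with $2i<j$'' is false for nominally winning memoryless strategies, and your parenthetical argument does not establish it: the absence of a cycle of odd minimum priority in $A|_{S}$ only constrains odd-parity states that \emph{recur} on the trace, not those visited transiently. Concretely, let the nominal $S$-trace be $q_0\, q_1\, q_2\, q_2\, q_2\ldots$ with $q_1\in F_3$ and $q_2\in F_4$, and let there be an unused $b$-transition from $q_1$ to some $q_3\in F_2$ so that Assumption~\ref{connectparity} is satisfied. Then $S$ is nominally winning (the limit set has minimal parity $4$, which is even), yet the $S$-trace from $q_1$ never visits a state of parity $0$ or $2$; hence $R^i_{\mathscr F}$ is identically $\infty$ (or at best non-decreasing) for every $i$ with $2i<3$ along this trace, and inequality~\eqref{parityineq} with $2i\leq j$ cannot be certified at $q_1$ by the mechanism you describe. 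The paper's proof confronts this same phenomenon by taking $i$ to be the least even parity actually $S$-reachable from $q$ (which may exceed the parity of $q$), setting all components below $i$ to $\infty$, and asserting the decrease only ``for some $i$ depending on $q$.'' To close your argument you should either adopt that $\infty$-padding device, or split the analysis: your cycle argument is correct, and suffices, for odd-parity states lying on cycles of $A|_{S}$ (which is where the parity condition actually bites), while transiently visited odd states must be handled separately, e.g.\ by appealing to the finitely-many-violations clause of Proposition~\ref{rhd}.
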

\begin{proof}
That a strategy induced from a control Lyapunov function is nominally winning follows immediately from Proposition \ref{CLFtorhd}.
So let $S:Q \to \Sigma$ be a deterministic memoryless nominally winning strategy for $(A,\mathscr F)$.
In order to synthesize a control Lyapunov function from which $S$ may be induced, the state set $Q$ is partitioned into $n+1$ pieces,
\[ Q = \overline{F_0} \cup \overline{F_2} \cup \ldots \cup \overline{F_{2n}} \]
where the sets $\overline{F_{2i}}$ for $i = 0,\ldots, n$ are defined as follows.
For $q \in Q$, let $i \in \{0,\ldots,n\}$ be the least such that there exists a trace resulting from applying $S$ in $A$ connecting $q$ to a state in the set $F_{2i}$. Then the state $q$ is contained in $\overline{F_{2i}}$. Since we assume that a state of even parity may be reached from all states in $Q$, the resulting sets form a partition.

We construct from $(A,\mathscr F)$ a weighted digraph $(Q,E)$. An edge $(q,x_q,q')$ is contained in the edge set $E$ if and only if
$q \in \overline{F_{2i}} \setminus F_{2i}$ for some $i \in \{0,\ldots,n\}$ and $\delta(q,S(q),\epsilon)=q'$. Let \mbox{$\eta:\mathbb R_0^+ \to \mathbb R_0^+$} be a monotonically increasing function with $\eta(0) = 0$. The value \mbox{$x_q=(x_0,x_1,\ldots,x_n) \in (\mathbb R_0^+)^{n+1}$} is defined as follows:
\begin{itemize}
\item for all $j \in \{0,\ldots,n\}$ with $j \geq i$, $x_j =  \eta(d(q,F_{2i}))$ ;
\item for all $j \in \{0,\ldots,n\}$ with $j < i$, $x_j = \infty$.
\end{itemize}

Define  $R:Q \to (\overline{\mathbb R}_0^+)^{n+1}$ where for $q \in \overline{F_{2i}}\setminus F_{2i}$, $R(q) = \sum_{q' \in \tau} x_{q'}$ where $\tau \in Q^*$ is the unique trace connecting $q$ to some state in $F_{2i}$ resulting from applying the strategy $S$ in $A$.

Let $F_d = \bigcup_{i=0}^n (F_{2i} \cap \overline{F_{2i}})$, the set of states for which the function $R$ has not been defined. Notice that it is not necessarily the case that $F_d = F$. These states are precisely those states of even parity from which a state of lower or equal even parity cannot be reached - that is, $F_d$ coincides precisely with the set $\overline{Q}$. For $q \in \overline{Q}$ set $R^i(q) = 0$ where $q \in F_{2i}$, $R^j(q) = d(q,F_{2j})$ for $j > i$ and $R^j(q) = \infty$ for $j < i$ where $j \in \{0,\ldots,n\}$.

We once again observe that for all $q \in Q \setminus \overline{Q}$
\[ R(\delta(q,S(q),\epsilon)) - R(q) \prec^i -\eta(d(q,\mathscr F)) \]
for some $i$ depending on $q$. Hence $R$ is a control Lyapunov function.

Since the choice of input for $q \in \overline{Q}$ may be arbitrary for a strategy induced from $R$, the result follows.
\end{proof}

 The following result takes advantage of the extra flexibility resulting from a partial colouring of the state set. If each set $F_{2i}$ for $i = 0, \ldots, n$
has only non-parity states in its immediate neighbourhood, the sets may be inflated without overlap to ensure that a strategy induced from a control Lyapunov function
is winning for an inflated acceptance condition $\mathscr F'$ as defined below.

\begin{theorem}\label{paritysigma}
Let $(A,\mathscr F)$ be a finite parity automaton satisfying Assumption \ref{connectparity} with constant disturbance bound $\overline{\gamma}$ and let $S$ be a deterministic memoryless strategy induced from a control Lyapunov function $R_{\mathscr F}$.
Further, let $F = \bigcup_{j=0}^{2n+1} F_j$ be such that $F \subsetneq Q$ and for all
$i \in \{0, \ldots , n\}$ if $q \not\in F_{2i}$ and $f(d(q, F_{2i})) \leq K\overline{\gamma}$ then $q \not\in F$.
Then $S$ is a $\sigma$-robust winning strategy for $\sigma = f^{-1}(K\overline{\gamma})/\overline{\gamma}$.

\end{theorem}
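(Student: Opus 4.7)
The plan is to mimic the proof of Theorem \ref{safetysigma}, lifted to the parity setting by exploiting the parity progress measure $\rhd$ together with inequality (\ref{parityineq}). I fix an arbitrary disturbance strategy $T$, let $\tau = q_0 q_1 q_2 \ldots$ be any outcome of $S$ and $T$, and show that $\tau$ satisfies the inflated condition $\mathscr F' = \{F'_0, F_1, F'_2, \ldots, F'_{2n}, F_{2n+1}\}$.

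First I would establish a disturbed version of the control Lyapunov inequality componentwise. Exactly as in the opening computation of Theorem \ref{safetysigma}, using Lipschitz continuity of each component $R^i_{\mathscr F}$ with common constant $K$, the triangle inequality, and the nominal inequality (\ref{parityineq}) at $q_k$, I derive
\[
R^i_{\mathscr F}(q_k^{ax}) - R^i_{\mathscr F}(q_k) \leq K\overline{\gamma} - f(d(q_k, F_{2i}))
\]
for every $q_k \in Q \setminus \overline{Q}$ on $\tau$, in the coordinate $i$ provided by (\ref{parityineq}). Whenever $d(q_k, F_{2i}) > \sigma\overline{\gamma} = f^{-1}(K\overline{\gamma})$ the right-hand side is strictly negative, so the $i$-th coordinate strictly decreases and, reading through the case analysis of $\rhd^{q_k}$, the full parity progress measure at $q_k$ is preserved. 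Otherwise $q_k \in F'_{2i}$.

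Next I would invoke the theorem's hypothesis to interpret the failure set. By assumption, if $q \notin F_{2i}$ and $f(d(q, F_{2i})) \leq K\overline{\gamma}$ then $q \notin F$; so every state at which the nominal decrease may be overwhelmed by the disturbance either already lies in $F_{2i}$ or is uncolored in $\mathscr F$, and in either case carries even parity $2i$ in $\mathscr F'$. States in $\overline{Q} \subseteq F$ already contribute an even color in $\mathscr F$ (and hence in $\mathscr F'$), so visits to $\overline{Q}$ are harmless even though the induced strategy is arbitrary there. Consequently, the set of indices $k$ at which $\rhd^{q_k}$ may fail is contained in $\bigcup_i F'_{2i}$, i.e.\ inside the even-colored part of $\mathscr F'$.

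I would conclude by a loop decomposition analogous to Propositions \ref{rhd} and \ref{CLFtorhd}. Since $Q$ is finite, any suffix of $\tau$ has the form of a prefix followed by infinitely repeated simple loops. On such a loop $p_1 p_2 \ldots p_m p_1$, at every state outside $\bigcup_i F'_{2i}$ the preserved progress measure gives $R_{\mathscr F}(p_\ell) \succeq^k R_{\mathscr F}(p_{\ell+1})$, with strict inequality at odd-parity states of index at most $2k+1$, where $2k$ is the least parity appearing on the loop in $\mathscr F'$; if this least parity were odd, closing the loop would yield $R_{\mathscr F}(p_1) \succ^k R_{\mathscr F}(p_1)$, a contradiction. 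Hence the least parity on every recurring loop in $\mathscr F'$ is even, so $\tau$ is winning for $(A, \mathscr F')$ and $S$ is $\sigma$-robust with $\sigma = f^{-1}(K\overline{\gamma})/\overline{\gamma}$.

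The hardest step will be the bookkeeping of indices when the originally disjoint colors $F_{2i}$ inflate to potentially overlapping sets $F'_{2i}$: the ``effective parity'' of a state that lies in several $F'_{2i}$ simultaneously must be fixed (taking the least such $i$ seems natural), and then the case analysis in the definition of $\rhd^{q_k}$ must be checked to ensure the coordinate in which decrease is guaranteed at a failure-free step is compatible with the coordinate to which the loop argument appeals. The finiteness of $\overline{Q}$ and the common Lipschitz constant $K$ across all components of $R_{\mathscr F}$ should make these compatibilities routine, but they require careful verification before the loop argument goes through cleanly.
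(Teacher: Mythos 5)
Your proposal is correct and follows essentially the same route as the paper: the paper likewise derives the disturbed componentwise inequality $R_{\mathscr F}(q^{ax}) - R_{\mathscr F}(q) \preceq^i (K\overline{\gamma},\ldots,K\overline{\gamma}) - f(d(q,\mathscr F))$ via the Lipschitz/triangle-inequality computation of Theorem \ref{safetysigma}, and then concludes that the least colour occurring infinitely often on the outcome forces infinitely many visits to the inflated set $F'_{2i}$, with the stated hypothesis on $F$ guaranteeing that the inflation introduces no parity conflicts. Your loop-decomposition ending and your flagged concern about states lying in several inflated sets simultaneously merely spell out steps the paper compresses into a single sentence.
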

\begin{proof}
Assume first that $R_{\mathscr F}$ is a Lipschitz continuous control Lyapunov function for $(A,\mathscr F)$ and let $S$ be a deterministic memoryless strategy induced from $R_{\mathscr F}$. Let $T$ be a disturbance strategy and let $\tau \in Q^{\omega}$ be the unique nominal outcome resulting from $S$ and $T$.
An argument similar to the one used in Theorem \ref{safety} implies that for each $q \in \tau$
\begin{equation}\label{parineq}
R_{\mathscr F}(q^{ax}) - R_{\mathscr F}(q) \preceq^i (K\gamma(q^{ax}),\hdots,K\gamma(q^{ax})) -f(d(q,\mathscr F)).
\end{equation}
Let $2i$ be the least colour appearing infinitely often on $\tau$ and define $F_{2i}' = \{ q \in Q \mid d(q,F_{2i}) \leq \sigma \overline{\gamma}\}$ for $i=0,\ldots,n$.
Inequality (\ref{parineq}) implies that $\tau$ will visit infinitely often states in the set $F'_{2i}$ in $A$.
Since, by assumption, states in $F_{2i}' \setminus F_{2i}$ are not contained in $F$, the inflation from $F_{2i}$ to $F'_{2i}$ will not cause any state to have more than one parity,
and we conclude that the strategy $S$ is $\sigma$-robust.
\end{proof}

For parity automata with state dependent disturbance bounds we again use the operators $\overline{g}$ and $\underline{g}$, but this time with some modifications to take advantage of the progress measure $\rhd$. As for the case of reachability automata, $opt^0$ is defined to be a vector of size $m$ over the positive reals, however this time we let $opt^0(j) = ( d(q_j,F_0), d(q_j,F_2), \ldots, d(q_j,F_{2n}))$ where $\mathscr F = \{ F_0,F_1,\ldots,F_{2n+1}\}$. The operators $\overline{g}$ and $\underline{g}$ are defined in the same way as for reachability automata but the underlying ordering used for the minimum operation is the lexicographic ordering on the $n+1$-tuples instead of numerical ordering as in previous cases. This alteration will not affect the complexity of the algorithm.

 For a nominally winning strategy $S$ for a finite parity automaton $(A,\mathscr F)$, $\sigma$ may be recovered by calculating $\underline{opt}^*$ for $A|_{S}$. We abuse previous notation and let $opt(j,k)$ denote the $k$-th index in the $n+1$ tuple appearing on the $j$th line of the vector $opt$.  Then
\[ \sigma= \frac{\max_{k = 1, \ldots, n+1}\overline{opt}^*(0,k)}{\overline{\gamma}}.\]

Then if $F = \bigcup_{k=0}^{2n+1} F_k$ is such that $F \subsetneq Q$ and for all
$i \in \{0, \ldots , n\}$ if $q \not\in F_{2i}$ and \mbox{$d(q, F_{2i}) \leq \sigma\overline{\gamma}$} then $q \not\in F$, the strategy $S$ is $\sigma$-robust.

For optimal strategy synthesis we first restrict the automaton $A$ with respect to the progress measure $\rhd$. For all $q \in Q$, $a \in \Lambda(q)$ if and only if $opt_0(j) \rhd opt_0(j')$ where $\delta(q,a,\epsilon)=q_{j'}$. We denote the automaton restricted in this way by $A|_{\rhd}$. Calculating $\underline{opt}^*$ for $A|_{\rhd}$, the optimum achievable value of $\sigma$ is recovered as
\[ \sigma_{min} = \frac{\min_{k=1,\ldots,n+1} \underline{opt}^*(0,k)}{\overline{\gamma}}.\]
The method of induction of the strategy $S$ is a straightforward generalization of the approach presented for previous acceptance conditions. Again one must check the separation of the even parity sets with respect to the distance function $d$ to ensure that the resulting strategy will be robust.

\section{Application: Transient faults}

Transient faults, such as single-event upsets, are unpredictable disturbances in electronic systems
that can cause bits in an electronic circuit to flip.
They are becoming more relevant in electronic systems design due to reductions in feature sizes \cite{Borkar:DAC06,rosetta:xilinx,normand:ieee}.
We show that strategies synthesized using control Lyapunov functions are robust to infinitely occurring transient
faults provided they occur infrequently enough.

Let $N\in\N$.
A disturbance strategy $T: Q^+\times \Sigma \to X$
is {\em $N$-bounded} if, whenever $T(\tau,a) \not = \epsilon$
and $T(\tau',b) \not = \epsilon$ for traces $\tau,\tau'\in Q^*$ with $\tau$ a proper prefix
of $\tau'$ and $a,b\in\Sigma$, we have \mbox{$|\tau'| - |\tau| \geq N$.}
Intuitively, disturbance strategies are $N$-bounded if any two occurrences of (non-trivial)
disturbances are separated by at least $N$ steps.

Let $A$ be an automaton and $F$ a (B\"uchi or parity)
acceptance condition.
Our main result is that for sufficiently large (but finite) $N$,
a nominally winning strategy induced from a control Lyapunov function
is winning against $N$-bounded disturbance strategies.


\begin{proposition}\label{buchiinffault}
Let $A$ be an automaton, $F$ a B\"uchi acceptance condition and $\mathscr F$ a
parity acceptance condition.
\begin{description}
\item[(i)]
Let $R_{F}$ be a control Lyapunov function for the B\"uchi automaton $(A,F)$
and let $S$ be a $\sigma$-robust deterministic strategy induced from $R_{F}$.
Then $S$ is winning against every $N$-bounded disturbance strategy
with $N\geq \max_{q\in F'} |\tau^S(q,F)|$
where
$F' = \{ q \in Q \mid d(q,F) \leq \sigma \overline{\gamma} \}$.

\item[(ii)]
Let $R_{\mathscr F}$ be a control Lyapunov function for the parity automaton $(A,\mathscr F)$
and let $S$ be a $\sigma$-robust strategy induced from $R_{\mathscr F}$.
Then $S$ is winning against every $N$-bounded disturbance strategy,
where
\[
\infty > N \geq \max_{i = 0, \ldots, n}\left(\max \left( \{|\tau^S(q,F_{2i})|: q \in F'_{2i}\} \cap \mathbb R \right) \right)
\]
for $F'_{2i} = \{ q \in Q \mid d(q,F_{2i}) \leq \sigma \overline{\gamma} \}$.

\end{description}
%
\end{proposition}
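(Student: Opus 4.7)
The plan is to track the control Lyapunov function along outcomes of $S$ under an $N$-bounded disturbance strategy, using strict nominal decrease of $R_F$ (respectively $R_{\mathscr F}$) between disturbances to recover the desired acceptance.

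Fix an $N$-bounded disturbance strategy $T$ and an arbitrary outcome $\tau = q_0q_1q_2\ldots$ of $S$ and $T$. Let $t_1 < t_2 < \ldots$ denote the times at which $T$ returns a non-$\epsilon$ symbol; by $N$-boundedness, $t_{j+1} - t_j \geq N$, so the segment $q_{t_j+1}q_{t_j+2}\ldots q_{t_{j+1}}$ of $\tau$ is a purely nominal trace under $S$ of length at least $N$. For part (i), it suffices to show that $F$ is visited in every such window; if only finitely many disturbances occur, the tail of $\tau$ is a nominal outcome of $S$ and is winning by Proposition~\ref{buchi}.

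For part (i), inside the nominal window starting at $q_{t_j+1}$, the deterministic strategy $S$ produces a unique continuation, exactly the trace $\tau^S(q_{t_j+1},F)$. By $\sigma$-robustness of $S$ applied to the outcome under $T$, the set $F'$ is entered within this window: using Theorem~\ref{safety}, the rank $R_F$ strictly decreases along the nominal segment, so the segment either reaches $F \subseteq F'$ directly or enters $F'$ after strict descent. Once a state $q\in F'$ is reached, the remaining nominal distance to $F$ is at most $\max_{q\in F'}|\tau^S(q,F)|$, so the hypothesis $N \geq \max_{q\in F'}|\tau^S(q,F)|$ guarantees the window is long enough to reach $F$ before the next disturbance at $t_{j+1}$. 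Iterating over $j$ shows $F$ is visited infinitely often, as required.

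For part (ii), I would lift the argument to the parity setting using the parity progress measure $\rhd$ and Propositions~\ref{rhd} and~\ref{CLFtorhd}. Between two consecutive disturbance times, the lexicographic control Lyapunov inequality~(\ref{parityineq}) forces strict $\rhd$-progress on nominal steps outside $\bar Q$, so the trace reaches some $F_{2i}$ within $\max_{q\in F'_{2i}}|\tau^S(q,F_{2i})|$ steps of any state reached after a disturbance once $F'_{2i}$ is entered; taking the outer maximum over $i$ handles the fact that one does not know in advance which even-parity set will be the least colour visited infinitely often. Since only finitely many positions in $\tau$ fail $R_{\mathscr F}(q_k) \rhd^{q_k} R_{\mathscr F}(q_{k+1})$ (exactly at disturbance times, and at states of $\bar Q$, whose set is finite), Proposition~\ref{rhd} then yields that $\tau$ is winning for $\mathscr F$.

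The main obstacle is the careful bookkeeping of which states can appear immediately after a disturbance: a priori $q_{t_j+1}$ need not lie in $F'$ (respectively any $F'_{2i}$), so one must use the strict Lyapunov descent during the nominal portion of the window to argue that $F'$ is entered with enough of the $N$ steps remaining to complete the nominal descent to $F$. In the parity case there is the added subtlety that the witnessing even index $2i$ may depend on where in the parity hierarchy the trajectory eventually settles, so the quantification over $i$ in the bound on $N$, together with the finiteness of $\bar Q$, has to be tracked carefully to invoke Proposition~\ref{rhd} legitimately.
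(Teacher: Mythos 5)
Your overall framing---decomposing the outcome into nominal windows of length at least $N$ between consecutive disturbances---is close in spirit to the paper's argument, but your central claim for part (i), that $F$ is visited in \emph{every} such window, is not justified, and the budget accounting behind it does not work. The hypothesis $N \geq \max_{q\in F'}|\tau^S(q,F)|$ only guarantees that a nominal descent of length $N$ starting \emph{at} a state of $F'$ reaches $F$. If the post-disturbance state $q_{t_j+1}$ lies outside $F'$, you first spend some $k>0$ steps of the window reaching $F'$ (and nothing guarantees $F'$ is even entered within $N$ steps from an arbitrary state: $\sigma$-robustness asserts that the whole outcome eventually visits $F'$, not that each window does, and strict decrease of $R_F$ over $N$ steps does not force $R_F$ down into the range corresponding to $F'$), and you are then left with only $N-k$ nominal steps to complete a descent that may itself require up to $N$ steps. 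The paper's proof instead performs a case split on the state reached \emph{immediately after} each disturbance: if that state is in $F'$, the full budget of $N$ nominal transitions is available and $F$ is reached before the next fault; if it is outside $F'$, $\sigma$-robustness is invoked to return the trace to $F'$, and the analysis recurses on how $F'$ is re-entered (by a fault, or nominally). You need some version of this case split; as written, the windows whose disturbance lands outside $F'$ are simply not covered by your argument.

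Part (ii) has an additional problem: you propose to conclude via the ``finitely many failures'' clause of Proposition~\ref{rhd}, asserting that the progress measure $\rhd$ fails only at disturbance times and at states of $\bar Q$. But an $N$-bounded disturbance strategy may issue \emph{infinitely many} disturbances (they need only be spaced at least $N$ apart), so the set of positions where $R_{\mathscr F}(q_k) \rhd^{q_k} R_{\mathscr F}(q_{k+1})$ fails is in general infinite, and Proposition~\ref{rhd} cannot be applied in the way you describe. The paper avoids this by arguing directly, fault by fault, that a state of suitable even parity is visited within the $N$ nominal steps following each disturbance (again via a case split on whether the post-disturbance state lies in some $F'_{2i}$, with $\sigma$-robustness handling the states at finite distance greater than $\sigma\overline{\gamma}$ from the relevant $F_{2i}$), rather than by appealing to an eventual-progress-measure argument.
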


In (ii) it is important that the value of $N$ is finite. Indeed, that $N$ is not finite is a possibility since there may exist sets of even parity which are not reachable from a given state $q \in Q$.

\begin{proof}
For (i), we show that for any $q \in Q$ there exists a finite trace in $A$ connecting $q$ to $F$ resulting from applying $S$.

First let $q \in Q$ be such that $d(q,F) > \sigma \overline{\gamma}$. Then since $S$ is $\sigma$-robust there exists a unique finite trace $\tau^S(q,F)$  ending at some state $q' \in Q$ such that $d(q',F)) \leq \sigma \overline{\gamma}$, regardless of how frequently the fault occurs.

Now assume $q \in Q$ is such that $d(q,F) \leq \sigma \overline{\gamma}$. By assumption if the unique trace $\tau^S(q_0,\{q\})$ is such that $q = p^{ax}$ for some $x \neq \epsilon$, that is,
the state $q$ was reached due to the effects of a fault, the next $N$ transitions on the trace will be nominal, that is, $x = \epsilon$. By definition of $N$ and $S$
the resulting subtrace of length $N$ will visit a state in the set $F$.

Assume instead that $q = p^{a\epsilon} \not\in F$. If the next state $q^{a'x}$ on the trace resulting from $S$ is such that $x = \epsilon$ then $q^{a'x}$ will satisfy $d(q^{a'x})\leq \sigma \overline{\gamma}$
and the same argument may be applied. Therefore if no fault occurs for the next $N$ transitions some state in the set $F$ will be reached. If a fault occurs, a state in the set $F$ will be reached in the $N$ transitions
following the fault.

If $x \neq \epsilon$
then either $d(q^{ax},F)) \leq \sigma \overline{\gamma}$ in which case the above argument applies, or \mbox{$d(q^{ax},F) > \sigma \overline{\gamma}$}
and the first argument applies. So we conclude that the strategy $S$ is winning in the automaton $A$ against an $N$-bounded disturbance strategy.


For (ii) the argument is similar. If $q$ has even parity then the result follows.
Assume instead that $q \in F_{j}$ with $j$ odd.
If for $i \in \{0,\ldots,n\}$, $\infty > d(q,F_{2i}) > \sigma \overline{\gamma}$ with $2i<j$ then since $S$ is $\sigma$-robust
 there exists a finite trace resulting from $S$ connecting $q$ to some state $q'$ satisfying $d(q',F_{2k}) < \sigma \overline{\gamma}$ for some $k \in \{0,\ldots,n\}$ regardless of how frequently the fault occurs.

Now assume $q$ and $i \in \{0,\ldots, n\}$ are such that $2i<j$ and $d(q,F_{2i}) \leq \sigma \overline{\gamma}$.
By assumption if the trace resulting from $S$ connecting $q_0$ to $q$ is such that $q = p^{ax}$ for some
$p \in Q$, $a\in \Sigma$ and $ x\in X \setminus \{\epsilon\}$ the next $N$ transitions will be nominal and the resulting subtrace will feature a state in the set $F_{2i}$.
If instead $q = p^{a\epsilon}$ then either
\begin{description}
\item[(i)] the next state on the trace is contained in a set $F_{2i}$ for some $2i<j$ and we are done;
\item[(ii)] $q^{ax} \not\in F_{2i}$ for some $i$ and $x \neq \epsilon$ and a state in a set of lower even parity will be reached in the next $N$ steps or
\item[(iii)] the next state $q^{ax}$ is such that $x=\epsilon$. Then the argument is repeated: if a fault does not occur for the next $N$ transitions then a state in a set
of lower even parity will be visited. If a fault occurs, a state of even parity will be visited in the next $N$ transitions following the fault.
\end{description}

Therefore a strategy $S$
induced from a Lipschitz continuous control Lyapunov function is winning for the parity automaton $(A,\mathscr F)$ against an $N$-bounded disturbance strategy.
\end{proof}

Compare the above result to the equivalent bound one might establish for a strategy induced from a classical shortest path rank function in a B\"uchi automaton:
in this case the value of $N$ must be greater than the length of the longest simple path connecting a state in $Q$ to a state in $F$.
In our result $N$ is defined with respect
to a potentially much smaller subset of $Q$.
Since the bound $N$ is a monotonically increasing function
of the environmental error $\overline{\gamma}$ this result provides a bridge between the state based view of faults and the running time of the system:
a less powerful fault may occur more frequently than a more powerful one without disrupting a well designed strategy.

\section{Discussion}
\label{Sec:Discussions}

We have presented a theory of robustness for $\omega$-regular properties
of automata. We have considered both deterministic and non-deterministic memoryless strategies, and disturbances whose power is bounded universally across the whole system, or bounded dependent upon the current state. In every case we provide methods to explicitly calculate and guarantee robustness of given strategies, as well as polynomial time algorithms to synthesize optimally robust strategies for a given system.
%
There are two natural extensions to our work.
First, in our model, bounded disturbances are the only source of adversarial interaction.
The presence of additional adversaries leads to (more complex) algorithms for solving two-player
games \cite{Bloem10,Zielonka98}.
We believe our simpler model is already applicable in many settings ---we are inspired by similar
models in continuous control--- and our polynomial-time algorithms render our results
applicable in practice. It would therefore be of interest to see how our results extend to a setting in which additional adversarial influences exist. 

Second, how can we combine our results on automata with the existing theory of robust control
for continuous systems? We believe that by consolidating some of the recently reported results~\cite{PGT08,ZPT10} on the existence of automata based abstractions of continuous control systems with the methods presented here we can expect to obtain a comprehensive robustness theory for cyber physical systems.

\bibliographystyle{plain}
\bibliography{ref}


\end{document}